\documentclass[lettersize,journal]{IEEEtran}
\usepackage{amsmath,amsfonts}
\usepackage{algorithmic}
\usepackage{algorithm}
\usepackage{array}
\usepackage{textcomp}
\usepackage{stfloats}
\usepackage{url}
\usepackage{verbatim}
\usepackage{graphicx}
\usepackage{cite}
\usepackage{subcaption}
\usepackage{framed}
\newtheorem{theorem}{Theorem}
\newtheorem{definition}{Definition} 

\newtheorem{proof}{Proof}
\usepackage{booktabs}
\usepackage{multirow}
\usepackage{tabularx}
\usepackage{makecell}
\usepackage{threeparttable}
\usepackage{stmaryrd}

\begin{document}

\title{Client-Verifiable and Efficient Federated Unlearning \\in Low-Altitude Wireless Networks}

\author{Yuhua Xu, Mingtao Jiang, Chenfei Hu, Yinglong Wang,~\IEEEmembership{Senior Member,~IEEE,} \\Chuan Zhang,~\IEEEmembership{Member,~IEEE,} Meng Li,~\IEEEmembership{Senior Member,~IEEE,} Ming Lu, Liehuang Zhu,~\IEEEmembership{Senior Member,~IEEE}
\thanks{This work was supported by the National Natural Science Foundation of China (Grant No. 62472032), and the Young Elite Scientists Sponsorship Program by CAST (Grant No. 2023QNRC001), and the Taishan Scholars Program (Grant No. tspd20240814).}
\thanks{Yuhua Xu, Mingtao Jiang, Chenfei Hu, and Liehuang Zhu are with the School of Cyberspace Science and Technology, Beijing Institute of Technology, Beijing 100081, China (e-mail: yuhuax21@bit.edu.cn; jiangmingtao@bit.edu.cn; chenfeih@bit.edu.cn; liehuangz@bit.edu.cn).}
\thanks{Yinglong Wang is with the Key Laboratory of Computing Power Network and Information Security, Ministry of Education, Shandong Computer Science Center, Qilu University of Technology (Shandong Academy of Sciences), Jinan 250014, China (e-mail: wangyinglong@qlu.edu.cn).}
\thanks{Chuan Zhang is with the School of Cyberspace Science and Technology, Beijing Institute of Technology, Beijing 100081, China, also with the Key Laboratory of Computing Power Network and Information Security, Ministry of Education, Shandong Computer Science Center, Qilu University of Technology (Shandong Academy of Sciences), Jinan 250014, China (e-mail:  chuanz@bit.edu.cn).}
\thanks{Meng Li is with the School of Computer Science and Information Engineering, Hefei University of Technology, Hefei, Anhui 230601, China (e-mail: mengli@hfut.edu.cn).}
\thanks{Ming Lu is with the Zhongguancun Rongzhi Enterprise Management Innovation Promotion Center, Beijing 100048, China (e-mail: luming@buaa.edu.cn).}
\thanks{Corresponding author: Chuan Zhang and Ming Lu.}
}

\markboth{Journal of \LaTeX\ Class Files,~Vol.~14, No.~8, August~2021}%
{Shell \MakeLowercase{\textit{et al.}}: A Sample Article Using IEEEtran.cls for IEEE Journals}


\maketitle

\begin{abstract}
In low-altitude wireless networks (LAWN), federated learning (FL) enables collaborative intelligence among unmanned aerial vehicles (UAVs) and integrated sensing and communication (ISAC) devices while keeping raw sensing data local. Due to the ``right to be forgotten'' requirements and the high mobility of ISAC devices that frequently enter or leave the coverage region of UAV-assisted servers, the influence of departing devices must be removed from trained models. This necessity motivates the adoption of federated unlearning (FUL) to eliminate historical device contributions from the global model in LAWN. However, existing FUL approaches implicitly assume that the UAV-assisted server executes unlearning operations honestly. Without client-verifiable guarantees, an untrusted server may retain residual device information, leading to potential privacy leakage and undermining trust. To address this issue, we propose VerFU, a privacy-preserving and client-verifiable federated unlearning framework designed for LAWN. It empowers ISAC devices to validate the server-side unlearning operations without relying on original data samples. By integrating linear homomorphic hash (LHH) with commitment schemes, VerFU constructs tamper-proof records of historical updates. ISAC devices ensure the integrity of unlearning results by verifying decommitment parameters and utilizing the linear composability of LHH to check whether the global model accurately removes their historical contributions. Furthermore, VerFU is capable of efficiently processing parallel unlearning requests and verification from multiple ISAC devices. Experimental results demonstrate that our framework efficiently preserves model utility post-unlearning while maintaining low communication and verification overhead.
\end{abstract}

\begin{IEEEkeywords}
Low-altitude wireless networks, federated learning, verifiable unlearning, privacy preservation, linear homomorphic hash, efficient validation.
\end{IEEEkeywords}

\section{Introduction}
\IEEEPARstart{L}{ow-altitude} wireless networks (LAWN) constitute a key layer of digital infrastructure to support unmanned aerial vehicles (UAVs), drone swarms, and future urban air mobility systems \cite{yuan2025ground, luo2025toward}. By integrating communication, sensing, control, and computing across aerial and terrestrial nodes, LAWN enables a wide range of mission-driven and safety-critical applications, including autonomous aerial coordination, environmental monitoring, emergency response, and infrastructure inspection \cite{wang2025toward}. Due to limited transmission capacity, stringent latency constraints, and the sensitive nature of sensing data, centralized model training is often impractical in LAWN. As a distributed machine learning paradigm, federated learning (FL) is a promising solution in LAWN that enables multiple clients to train models locally on their private datasets while collaboratively constructing a shared global model by exchanging model updates such as parameters or gradients \cite{mcmahan2017communication, yu2025lightweight}. In typical LAWN-based FL deployments, model aggregation is performed by UAV-assisted servers, while data are generated by integrated sensing and communication (ISAC) devices such as vehicles, cameras, and mobile phones, which act as clients in the FL process. This paradigm is well aligned with the decentralized and resource-constrained nature of LAWN. By keeping raw sensing data local, FL supports scalable learning across mobile nodes, thereby serving as a fundamental building block for edge intelligence in low-altitude networks.

However, regulations such as the General Data Protection Regulation (GDPR) \cite{voigt2017eu} and the California Consumer Privacy Act (CCPA) \cite{harding2019understanding} mandate the Right to Be Forgotten (RTBF), granting participants the authority to request the deletion of their historical data contributions. Consequently, deploying FL in LAWN needs to operate under strict regulatory constraints. Low-altitude environments exhibit frequent node turnover caused by mobility, task reassignment, maintenance, or security incidents. For example, vehicles dynamically enter or leave the coverage region of a UAV-assisted server. Within LAWN learning systems, the RTBF requirement translates into the ability to eliminate the influence of withdrawn or departing ISAC devices from learned models, ensuring that distributed intelligence remains controllable, compliant, and trustworthy.

To address this requirement, federated unlearning (FUL) has been introduced \cite{liu2021federaser, wu2022federated, ding2024strategic}. It ensures that when some clients withdraw, their historical data contributions are erased from the global model. The most straightforward approach to machine unlearning involves complete model retraining on residual data \cite{bourtoule2021machine, shaik2024exploring}. However, this incurs substantial computational overhead \cite{ginart2019making}. In LAWN-based FUL, where the server lacks direct access to client data and client participation is inherently dynamic, such retraining strategies cannot be applied directly. As a result, recent research on federated unlearning primarily focuses on eliminating the impact of historical gradient contributions from clients on the server side \cite{liu2021revfrf, guo2023fast, fu2024client}. Liu et al. \cite{liu2021federaser} proposed a method that utilizes stored historical parameter updates on the server to iteratively calibrate and reverse model contributions associated with the target client. Building on this idea, Guo et al. \cite{guo2023fast} introduced the FAST framework, which specifically targets the unlearning of malicious clients. By removing the historical updates of such clients and retraining the model on a reference dataset to correct potential bias, the framework effectively removes the contributions of malicious participants with high efficiency.

Despite their effectiveness, existing methods \cite{cao2023fedrecover, yuan2023federated, zhang2023fedrecovery} are grounded in a critical assumption: the server will faithfully execute the unlearning process. In practical LAWN deployments, the UAV-assisted servers may be operated by third-party service providers. They may deliberately retain the gradient contributions of high-value clients to maintain model utility, leading to incomplete unlearning and potential manipulation of the outcome. Residual contributions can still be exploited through inversion attacks, thereby exposing sensitive information \cite{romandini2024federated}. Furthermore, as the unlearning process is entirely controlled by the server, clients are deprived of any means to verify whether their contributions have truly been erased. The lack of independent verification mechanisms exposes a critical vulnerability in existing FUL systems and undermines trust in distributed intelligence for LAWN.

\IEEEpubidadjcol

For verifiable machine unlearning, existing methods \cite{sommer2022athena, guo2023verifying} typically require data owners to actively inject backdoor triggers into their data, such as modifying original labels to erroneous ones. If the predictions on the unlearned model retain high accuracy with original labels while failing to respond to triggered backdoor samples, it indicates that the model provider has faithfully removed the backdoor-injected data. In FL, VERIFI \cite{gao2024verifi} pioneered a verification framework based on marked data samples. In this method, clients upload specific samples and fine-tune the global model. The change in model loss is then monitored to infer whether the corresponding data contributions have been removed. Although this method eliminates the need for backdoor injection, it suffers from two fundamental limitations when applied to LAWN deployments. First, the verification process relies on access to original training samples, which is impractical in LAWN scenarios where sensing data are generated continuously and stored temporarily. Second, the method processes unlearning requests in a sequential manner, making it inefficient in low-altitude networks where many ISAC devices may concurrently enter or leave the coverage region of an UAV-assisted server.

To address the above challenges, we propose VerFU, a privacy-preserving and client-verifiable federated unlearning framework designed for low-altitude wireless networks. VerFU enables ISAC devices to verify whether the untrusted UAV-assisted server has honestly erased their historical contributions. In addition, VerFU supports lightweight verification without relying on original sensing data samples, handles concurrent unlearning requests from multiple ISAC devices, and preserves the performance of the global model after unlearning. Specifically, we design tamper-proof records of historical contributions using a synergistic combination of linear homomorphic hash and commitment scheme. ISAC devices encrypt their updates using homomorphic encryption, allowing the server to perform aggregation and unlearning directly in the encrypted domain, thus ensuring data privacy. Furthermore, leveraging the consistency of homomorphic hash linear combinations, ISAC devices can jointly verify the correctness of the unlearning result. A client state tagging mechanism is introduced to efficiently manage parallel unlearning requests and verifications from multiple ISAC devices.

The main contributions are summarized as follows:
\begin{itemize}
    \item We propose a client-verifiable federated unlearning framework for LAWN systems, which is independent of data samples. It enables ISAC devices in LAWN to verify the honesty of unlearning operations performed by untrusted UAV-assisted servers. By constructing tamper-proof records of historical contributions, the framework prevents UAV-assisted servers from forging or partially executing unlearning while preserving post-unlearning model utility.
    \item We design an efficient concurrent unlearning and verification mechanism based on a client state tagging strategy, allowing multiple unlearning requests to be processed in parallel. This mechanism accommodates frequent ISAC device join-leave behavior in LAWN systems while producing low computational and communication overhead.
    \item We develop a privacy-preserving aggregation and unlearning protocol that operates directly in the encrypted domain, preventing the leakage of raw model updates and sensitive information. We further provide rigorous security proofs and analysis to demonstrate the privacy and integrity guarantees of the proposed framework.
    \item We conduct extensive evaluations of the VerFU framework under dynamic unlearning scenarios. Experimental results demonstrate that VerFU can effectively recover model utility after unlearning while incurring minimal communication and verification overhead, validating its practicality for LAWN deployments.
\end{itemize}

The remainder of this paper is organized as follows. We present some preliminaries in Section \ref{sec2}. Subsequently, we describe the architecture, threat model, and design goals in Section \ref{sec3}. The design details of VerFU are illustrated in Section \ref{sec4}. Section \ref{sec5} provides formal security proofs. The performance evaluation of VerFU is demonstrated in Section \ref{sec6}. In Section \ref{sec7}, we review the related work. We conclude the paper in Section \ref{sec8}.

\section{Preliminaries}\label{sec2}

\subsection{Federated Unlearning}
Federated unlearning (FUL), an extension of machine unlearning, addresses the challenges of data erasure in FL settings \cite{wang2024fedu, pan2025federated}. The primary objective of federated unlearning is to ensure that the model after unlearning has the same distribution as the model retrained from scratch using the remaining clients. This requires eliminating the influence of the clients requesting unlearning on the global model \cite{liu2024survey}.

In an FL system, the participants consist of a server $S$ and a set of $N$ clients $U=\{u_{1}, \dots, u_{N}\}$. Each client $u_{i} \in U$ holds a private local dataset $D_{i}$ and collaboratively trains a global model that minimizes the average loss across all clients. Specifically, the global model at round $t$ is denoted as $w_{t}$. The client $u_{i}$ downloads the global model $w_{t}$ and trains a local model $w_{t+1}^{(i)}$ based on its respective local dataset $D_{i}$. Subsequently, the client $u_{i}$ uploads its model update, represented as gradient $v_{t+1}^{(i)}= w_{t+1}^{(i)}-w_{t}$. The server aggregates client updates and performs the global model update using the classical FedAvg algorithm \cite{mcmahan2017communication}, where the global model is updated as $w_{t+1}= w_{t}+\frac{1}{N} \sum_{u_{i}\in U}v_{t+1}^{(i)}$.
As for traditional FUL, the process is equivalent to retraining the model from scratch using only the clients that did not request unlearning. The client set $U$ is divided into two distinct subsets: the unlearning client set $U_{unl}$ and the normal client set $U_{nor}$, where $U=U_{unl} \cup U_{nor}$. Accordingly, the federated unlearning problem can be formulated as $w_{t+1}= w_{t}+\frac{1}{|U_{nor}|} \sum_{u_{i}\in U_{nor}}v_{t+1}^{(i)}$.


\subsection{Cryptographic Primitives}
\subsubsection{Homomorphic Encryption}
Homomorphic Encryption (HE) \cite{rivest1978data} performs computations directly on encrypted data, yielding decrypted results equivalent to those from plaintext computations. The Paillier cryptosystem \cite{paillier1999public} is typically defined by three core algorithms:

\begin{itemize}
\item{$\mathbf{HE.Gen}(\kappa) \to (pk,sk)$}: Given the security parameter $\kappa$, two large prime numbers $p$ and $q$ are selected such that $gcd(pq,(p-1)(q-1))=1$, where $gcd(\cdot)$ denotes the greatest common divisor function. Choose a random integer $g \in \mathbb{Z}_{n^2}^\ast$ and compute $n=pq$, forming the public key $pk=(n,g)$. The private key $sk=(\lambda,\mu)$ is calculated by $\lambda=lcm(p-1,q-1)$ and $\mu=(L(g^\lambda\mod n^2))^{-1} \mod n$, where $lcm(\cdot)$ represents the least common multiple function and $L(x)=(x-1)/n$. 
\item{$\mathbf{HE.Enc}(pk,c) \to \llbracket c \rrbracket$}: For encrypting the plaintext $c$ with the public key $pk$, select a random number $r \in \mathbb{Z}_m^\ast$ and calculate the ciphertext $\llbracket c \rrbracket = g^{c} r^{n}\mod n^2$.
\item{$\mathbf{HE.Dec}(sk,\llbracket c \rrbracket) \to c$}: With the private key $sk$, the ciphertext $\llbracket c \rrbracket$ can be decrypted through $c=L({\llbracket c \rrbracket}^\lambda \mod n^2)\mu \mod n$.
\end{itemize}
Therefore, given the plaintexts $c_1$ and $c_2$, the additive homomorphism of the Paillier cryptosystem allows computing $\llbracket c_1 +c_2 \rrbracket =\llbracket c_1 \rrbracket \cdot \llbracket c_2 \rrbracket$. This can be simplified as $\llbracket c_1 \rrbracket \oplus \llbracket c_2 \rrbracket$, where $\oplus$ represents the addition operation on the ciphertext. Besides, it supports constant multiplication by a plaintext number $b$, which is denoted as $\llbracket b \cdot c_1 \rrbracket =\llbracket c_1 \rrbracket ^b$. Subtraction can be realized through modular inverse, i.e., $\llbracket c_1 -c_2 \rrbracket =\llbracket c_1 \rrbracket \cdot \llbracket c_2 \rrbracket^{-1}$. Similarly, we simplify this operation as $\llbracket c_1 \rrbracket \ominus \llbracket c_2 \rrbracket$, where $\ominus$ represents subtraction operation on the ciphertext.


\subsubsection{Linear Homomorphic Hash}

Linear Homomorphic Hash (LHH) \cite{bellare1994incremental} is utilized for data integrity verification and consistency checks. It leverages the homomorphic properties of hash functions to detect tampering or verify correct linear operations. LHH is generally divided into three algorithms:

\begin{itemize}
\item{$\mathbf{LHH.Gen}(\kappa,d) \to pp_{LHH}$}: Taking the security parameter $\kappa$ and dimension $d$ as input, the algorithm generates the public parameters $pp_{LHH}=(G,q,g)$, where $G$ is a cyclic group of prime order $q$ with generator $g$, and $g_{i \in \lbrack d\rbrack}$ is derived based on the dimension $d$.
\item{$\mathbf{LHH.Hash}(pp_{LHH},m) \to h(m)$}: Given the public parameters $pp_{LHH}$, the linear homomorphic hash $h(m)$ of $m$ can be produced by $h(m)=\prod_{i\in \lbrack d \rbrack} g_{i}^{m \lbrack i \rbrack}$.
\item{$\mathbf{LHH.Eval}(h_{0}, \dots, h_{N-1}, f_{0}, \dots, f_{N-1}) \to H$}: With $N$ linear homomorphic hashes $h_{i \in \lbrack N \rbrack}$ and coefficients $f_{i \in \lbrack N \rbrack}$ as input, compute their combined result $H=\prod_{i \in \lbrack N \rbrack} h_{i}^{f_{i}}$ as the output.
\end{itemize}

\subsubsection{Commitment}
Commitment schemes are widely employed in information exchange, where a committer binds a value without revealing it to the verifier. The committer is unable to alter the committed value, while the verifier remains unaware before decommitment. This ensures the privacy and integrity of the data. A non-interactive equivocal commitment scheme is typically categorized into three algorithms:
\begin{itemize}
\item{$\mathbf{COM.Gen}(\kappa) \to (td, pp_{COM})$}: Given the security parameter $\kappa$ as input, the output consists of a trapdoor $td$ and the public parameters $pp_{COM}$ that define the message space and commitment space.
\item{$\mathbf{COM.Com}(pp_{COM},x,r) \to m$}: During the commitment phase, the committer generates a commitment $m$ based on the chosen random value $r$ and the message $x$, and sends it to the verifier.
\item{$\mathbf{COM.Dec}(m,x',r') \to \{0,1\}$}: During the decommitment phase, given the commitment $v$, the claimed message $x'$, and the claimed random value $r'$ as input, the verifier checks whether $m=\mathbf{COM.Com}(pp_{COM},x',r')$ holds. If the condition is satisfied, the algorithm outputs 1, indicating that the message has not been altered.
\item {$\mathbf{COM.Equ}(m,(x,r),x',td) \to r'$}: Let $m \gets \mathbf{COM.Com}(pp_{COM},x,r)$ be a commitment string derived from message $x$ and randomness $r$. Given $m$, an arbitrary target message $x'$, and a trapdoor $td$, the algorithm produces a valid randomness $r'$ such that $m$ can be decommitted to $x'$.
\end{itemize}

\section{The Architecture of Design}\label{sec3}

\subsection{Architecture Overview}
We consider a LAWN-based FL that supports collaborative intelligence. As illustrated in Fig. \ref{fig:arc}, the architecture consists of a set of UAV-assisted servers $S=\{S^{1},S^{2},\dots,S^{L}\}$, each responsible for a low-altitude coverage region, and a collection of ISAC devices operating within these regions. For each UAV-assisted server $S^l \in S$, a subset of ISAC devices $U^l=\{u^l_{1},u^l_{2},\dots,u^l_{N_l}\}$ collaboratively train a global model by performing local learning on sensed data. Due to the autonomy and mobility of ISAC devices, device participation in each region is dynamic. ISAC devices may leave a coverage region because of physical movement, task reassignment, energy constraints, or explicit privacy requests, and thus require the removal of their historical contributions from the model maintained by the corresponding server. Accordingly, ISAC devices are categorized into two operational states. Devices that leave the coverage region or explicitly request unlearning are referred to as unlearning ISAC devices, denoted by $U^l_{unl} \subseteq U^l$. The remaining ISAC devices, donated by $U^l_{nor}=U^l \setminus U^l_{unl}$ and continue participating in collaborative training. 

VerFU focuses on enabling verifiable unlearning within each server-managed region, ensuring that the contributions of unlearning ISAC devices are correctly removed from the corresponding global model. Each ISAC device maintains a private local dataset generated by its sensing tasks and contributes to the global model by uploading protected model updates. VerFU allows ISAC devices to submit unlearning requests to the associated UAV-assisted server and provides them with verifiable capabilities to confirm whether the server has correctly executed the unlearning operation. Each UAV-assisted server aggregates encrypted updates, performs unlearning upon request, and broadcasts the updated global model to devices within its coverage area. Overall, the workflow of VerFU is organized into the following three main components:

\subsubsection{Local training and Uploading}
Each ISAC device trains locally on its private dataset and protects the gradients using homomorphic encryption before uploading them to its associated UAV-assisted server through a secure communication channel. The encrypted gradients are used for model aggregation without revealing raw sensing data. In addition, each ISAC device hashes and commits its local gradients to generate tamper-proof records of historical contributions, which serve as verifiable evidence for subsequent unlearning verification.
\subsubsection{Encrypted aggregation and unlearning}
Under normal operation, each UAV-assisted server aggregates the encrypted gradients uploaded by participating ISAC devices to update the global model. When an unlearning request is issued by an unlearning ISAC device, the associated server executes the unlearning procedure to eliminate the historical contributions of that device from previous training rounds. Since all gradients are protected by homomorphic encryption, both aggregation and unlearning are performed directly in the encrypted domain, leveraging the homomorphic properties to preserve data confidentiality throughout the process.
\subsubsection{Broadcasting global model and verifying}
After completing model aggregation and unlearning, the UAV-assisted server broadcasts the updated global model to the participating ISAC devices. Normal ISAC devices directly use the received model to initiate the next round of local training. In contrast, unlearning ISAC devices first open their tamper-proof records and verify the correctness of the unlearning results by exploiting the linear composability of LHH. If verification succeeds, it confirms that the associated server has honestly executed the unlearning operation, allowing unlearning ISAC devices to safely exit the federated learning process.

\begin{figure}[tbp]
    \centering   
    \includegraphics[width=\linewidth]{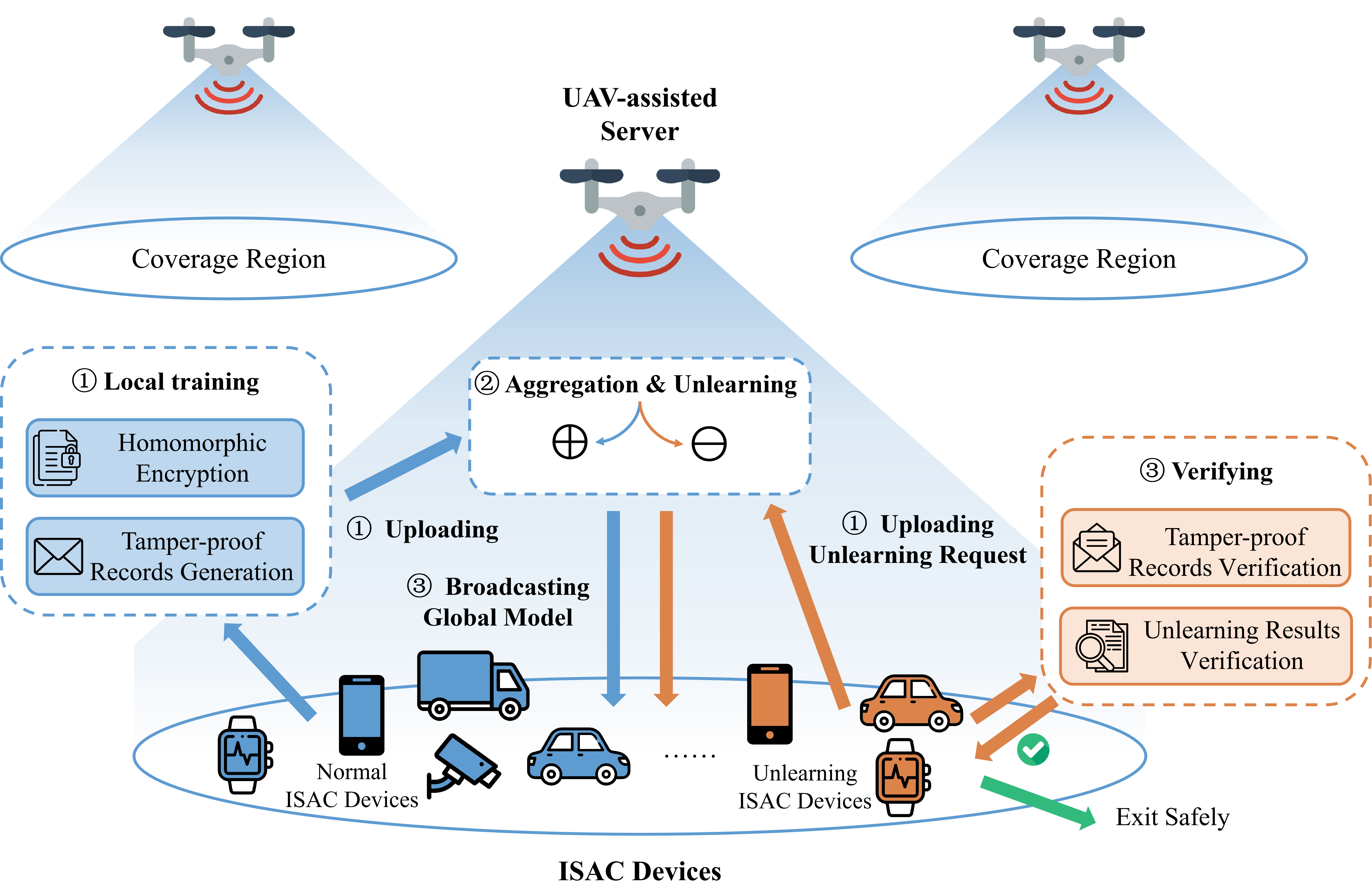}
    \caption{The architecture of VerFU.}
    \label{fig:arc}
\end{figure}

\subsection{Threat Model}
In the considered LAWN setting, we focus on potential security threats posed by untrusted UAV-assisted servers, which may be operated by third-party service providers or infrastructure operators. Each UAV-assisted server is assumed to follow the standard procedures required for FL tasks, but may behave dishonestly when handling unlearning requests issued by ISAC devices. In practice, performing unlearning incurs additional computational overhead and may degrade model performance, especially when the contributions of high-value or long-term devices are removed. As a result, a UAV-assisted server may falsely claim to have completed the unlearning operation while failing to remove, or only partially removing, the historical contributions of the requesting device. Under this threat model, residual device contributions may persist in the global model after the unlearning process, leading to privacy leakage and a loss of trust. Additionally, we assume that ISAC devices behave honestly. Therefore, the objective of VerFU is to provide verifiable FUL for LAWN deployments, enabling ISAC devices to independently verify whether a UAV-assisted server has correctly and honestly removed their historical contributions from the corresponding global model.

\subsection{Design Goals}
\subsubsection{Privacy of gradients}
The gradients generated by ISAC devices inherently reflect the characteristics of their locally sensed data. If these gradients are directly accessible to UAV-assisted servers in an unencrypted form, the servers could potentially exploit this information to infer sensitive details about the ISAC devices. Therefore, it is essential that the uploaded gradients are meaningful to the servers solely for aggregation, preserving their privacy.

\subsubsection{Effectiveness of unlearning}
When an ISAC device requests unlearning, the corresponding UAV-assisted server must completely remove contributions associated with that ISAC device from the global model. Any residual contributions could undermine the credibility and discourage future participation. Besides, the unlearning process should ensure that it does not negatively impact the training of other ISAC devices or significantly degrade the performance of the global model.

\subsubsection{Reliability of verification}
The correctness of the unlearning operation performed by UAV-assisted servers must be verified by the unlearning ISAC devices. This prevents the servers from falsifying unlearning proofs or incompletely executing the unlearning process. Moreover, the verification mechanism should be easy to implement and should not require the ISAC devices to provide additional samples to the server for comparing model performance before and after unlearning, thereby avoiding unnecessary computational overhead or additional privacy risk.

\begin{table}[tbp]
    \centering
    \renewcommand{\arraystretch}{1.2}
    \caption{Key Notations}\label{tab:4-1}
    \begin{tabular}{ll} 
    \toprule
    Notation& Description\\ 
    \midrule
    $U$& The set of all participating ISAC devices\\
    $U_{nor}$&The set of ISAC devices normally participating\\
    $U_{unl}$&The set of ISAC devices requesting to leave\\
    $u_{i}$&The $i$-th ISAC device\\
    $t$&The $t$-th iteration round\\
    $w_{t},w^{t'}$&The global model at round $t$\\ 
    $v_{t}^{(i)}$& The gradient of ISAC device $i$ at round $t$\\
    $w_{t}^{(i)}$& The local model of ISAC device $i$ at round $t$\\
    $cv_{i}$& The historical contribution of ISAC device $i$\\
    $h_{t}^{(i)}$& The LHH value of ISAC device $i$ at round $t$\\
    $r_{i}$& The random string of ISAC device $i$\\
    $m_{t}^{(i)}$& The commitment string of ISAC device $i$ at round $t$\\
    $(pk,sk)$& The key pair for homomorphic encryption\\
    $f_{i}$& The state indicator of ISAC device $i$\\
    $\Delta w^t,\Delta w^{t'}$& The average update of all ISAC devices at round $t$\\
    $a$& The aggregation result with unlearning\\
    $H$& The linear combination of $h_{t}^{(i)}$ across all ISAC devices\\
    \bottomrule
    \end{tabular}
\end{table}

\section{The proposed VerFU framework} \label{sec4}
In this section, we provide the detailed design of VerFU, a novel framework for verifiable federated unlearning in LAWN deployments. The core of VerFU is to enable UAV-assisted servers to remove the historical contributions of ISAC devices requesting unlearning during the model aggregation process, while allowing those devices to independently verify the correctness of the unlearning operation. In this way, the influence of departed or privacy-sensitive ISAC devices can be effectively eliminated from the learned model without relying on server trust. The complete framework of VerFU, as depicted in Fig. \ref{fig:full version}, is divided into three distinct phases: the preparation phase, the aggregation and unlearning phase, and the verification phase. To facilitate a clearer understanding of the design, the key notations used throughout this section are summarized in Table \ref{tab:4-1}.

\subsection{Preparations for verifiable federated unlearning}
\subsubsection{ISAC device-side training}
We consider a UAV-assisted server $S$ and its associated ISAC devices $U=\{u_{1},u_{2},\dots,u_{N}\}$. In each training round $t$, where $t \in \lbrack T \rbrack $, each ISAC device $u_{i}$ performs local training on its private dataset $D_{i}$, relying on the global model $w_{t-1}$ from the previous round. The resulting locally trained model is $w_{t}^{(i)}$, from which the gradient $v_{t}^{(i)}$ can be computed by $v_{t}^{(i)}= w_{t}^{(i)}-w_{t-1}$. 

To support verifiable federated unlearning in LAWN, each ISAC device $u_{i} \in U$ maintains a record of its historical contributions, denoted as $cv_{i}$, as follows:
\begin{equation}
    \label{eq4-0}
    cv_{i} = cv_{i}+v_{t}^{(i)}
\end{equation}
Specifically, the historical contribution for each ISAC device is initialized at the beginning round. Based on the gradients and historical contributions, normal ISAC devices and unlearning ISAC devices generate distinct linear homomorphic hashes, as illustrated below:
\begin{align}
{h_{t}^{(i)}} \leftarrow \begin{cases}
\mathbf{LHH.Hash}(pp_{LHH},v_{t}^{(i)}),&u_{i} \in U_{nor} \\ 
{\mathbf{LHH.Hash}(pp_{LHH},cv_{i}),}&u_{i} \in U_{unl} 
\end{cases}
\end{align}
Here, $pp_{LHH}$ represents the public parameters of the linear homomorphic hash, which are generated using the security parameter $\kappa$ through the $\mathbf{LHH.Gen}(\cdot)$ function. Each ISAC device further selects a random string $r_i$ to generate a commitment string $m_t^{(i)}$ based on Eq. (\ref{eq4-2}), which also serves as a tamper-proof record, and then transmits it to the associated UAV-assisted server.
\begin{equation}
    \label{eq4-2}
    m_{t}^{(i)} \leftarrow \mathbf{COM.Com}(pp_{COM},h_{t}^{(i)},r_{i})
\end{equation}
In Eq. (\ref{eq4-2}), $pp_{COM}$ is generated by $\mathbf{COM.Gen}(\cdot)$ function using a security parameter $\kappa$. Besides, it is worth noting that $(h_{t}^{(i)},r_{i})$ serves as the commitment opening string required for subsequent verification.

\subsubsection{UAV-assisted server-side broadcasting}
The UAV-assisted server $S$ first initializes the global model $w_{0}$ and broadcasts it to all ISAC devices to facilitate local training. In round $t$, upon receiving the commitment strings from the ISAC devices, the server broadcasts $\{(i,m_{t}^{(i)})\}_{u_{i} \in U}$ to all ISAC devices, ensuring the subsequent verification process can be carried out effectively.

\subsection{How to achieve aggregation with unlearning}
In a standard training process, where no ISAC devices request unlearning, the UAV-assisted server updates the global model by aggregating the gradients from all participating ISAC devices based on the global model from the previous round. Assuming the initial global model is denoted as $w_{0}$, the global model $w_{k}$ after $k$ rounds of iterative training can be expressed as follows:
\begin{equation}
    \label{eq4-4}
    w_{k}= w_{0}+\sum_{t=1}^{k} \Delta w^{t}
\end{equation}
In Eq. (\ref{eq4-4}), $\Delta w^{t}$ represents the weighted update from all participating ISAC devices during round $t$, expressed as $\Delta w^{t}=(1/|U|)\cdot \sum_{u_{i}\in U}v_{t}^{(i)}$, where $|U|$ denotes the number of ISAC devices in the set $U$.

Our framework considers scenarios in LAWN-based FL where ISAC devices dynamically leave the learning process due to mobility or privacy requirements and request the removal of their historical contributions. In this case, ISAC devices are categorized into normal ISAC devices $U_{nor}$ and unlearning ISAC devices $U_{unl}$. Accordingly, the global model parameter update $\Delta w^{t}$ for each round can be computed by separately aggregating the gradients of these two groups, as expressed below:
\begin{equation}
    \label{eq4-5}
    \Delta w^{t}=\frac{1}{|U|} \sum_{u_{i}\in U_{nor}}v_{t}^{(i)}+\frac{1}{|U|} \sum_{u_{i}\in U_{unl}}v_{t}^{(i)}
\end{equation}
Furthermore, to eliminate the historical updates of unlearning ISAC devices, the parameter updates of the global model at each round should exclusively aggregate the gradients of the normal ISAC devices, as expressed below:
\begin{align}
    \label{eq4-6}
    \Delta w^{t’}&=\frac{1}{|U_{nor}|} \sum_{u_{i}\in U_{nor}}v_{t}^{(i)}\notag
    \\&=\frac{1}{|U|-|U_{unl}|}(\sum_{u_{i}\in U}v_{t}^{(i)}- \sum_{u_{i}\in U_{unl}}v_{t}^{(i)})\notag
    \\&=\frac{|U|}{|U|-|U_{unl}|}\Delta w^{t}-\frac{1}{|U|-|U_{unl}|} \sum_{u_{i}\in U_{unl}}v_{t}^{(i)}
\end{align}
Naturally, from the derivation in Eq. (\ref{eq4-6}), $\Delta w^{t}$ can also be interpreted as subtracting the local model updates of the unlearning ISAC devices, followed by a proportional adjustment.

Moreover, observing the above equation reveals that when a significant number of ISAC devices request unlearning in the $t$-th round, but their contributions are relatively small, the second term approximately equals zero. Consequently, the first term is amplified due to the adjustment factor ${|U|}/{(|U|-|U_{unl}|)}$, which magnifies the global model update. This unintended impact on the performance of the global model requires further mitigation. Therefore, our framework assumes that unlearning ISAC devices participate in the training process but are not required to provide their gradients in the round $t$. The global model update $\Delta w^{t'}$ can be reformulated as follows:
\begin{align}
    \label{eq4-7}
    \Delta w^{t’}&=\frac{1}{|U|} \sum_{u_{i}\in U_{nor}}v_{t}^{(i)}\notag
    \\&=\frac{1}{|U|}(\sum_{u_{i}\in U}v_{t}^{(i)}- \sum_{u_{i}\in U_{unl}}v_{t}^{(i)})\notag
    \\&=\Delta w^{t}-\frac{1}{|U|} \sum_{u_{i}\in U_{unl}}v_{t}^{(i)}
\end{align}
Unlike Eq. (\ref{eq4-4}), the global model $w_{k}^{'}$ after $k$ rounds is updated based on the global model update $\Delta w^{t'}$, using an aggregation mechanism that incorporates unlearning, as follows:
\begin{align}
    \label{eq4-8}
    w_{k}^{'}&=w_{0}+\sum_{t=1}^{k} \Delta w^{t’}\notag
    \\&=w_{0}+\sum_{t=1}^{k}\Delta w^{t}-\frac{1}{|U|}\sum_{t=1}^{k} \sum_{u_{i}\in U_{unl}}v_{t}^{(i)}\notag
    \\&=w_{k-1}+\Delta w^{k}-\frac{1}{|U|} \sum_{u_{i}\in U_{unl}}cv_{i}\notag
    \\&=w_{k-1}+\frac{1}{|U|} \sum_{u_{i} \in U_{nor}} v_{k}^{(i)}-\frac{1}{|U|} \sum_{u_{i}\in U_{unl}}cv_{i}
\end{align}

In VerFU, to prevent the UAV-assisted server $S$ from directly accessing plaintext gradients, we utilize a homomorphic encryption algorithm to ensure the privacy of information uploaded by ISAC devices to the server. 

\subsubsection{ISAC device-side uploading}
A public-private key pair $(pk,sk)$ is first generated by $(pk,sk) \leftarrow \mathbf{HE.Gen}(\kappa)$ with a security parameter $\kappa$. Subsequently, each ISAC device $u_{i} \in U$ utilizes the public key $pk$ to encrypt its respective gradient or historical contributions as follows:
\begin{equation}
    \label{eq4-3}
    \llbracket v_{t}^{(i)} \rrbracket \leftarrow \mathbf{HE.Enc}(pk,v_{t}^{i}), u_{i} \in U_{nor}
\end{equation}
\begin{equation}
    \label{eq4-3}
    \llbracket cv_{i} \rrbracket \leftarrow \mathbf{HE.Enc}(pk,cv_{i}), u_{i} \in U_{unl}
\end{equation}

Furthermore, the subsequent model update $x_{i}$ to be uploaded is assigned the corresponding encrypted gradient or encrypted historical contribution as follows:
\begin{align}
{x_{i}} = \begin{cases}
\llbracket v_{t}^{(i)} \rrbracket,&u_{i} \in U_{nor} \\ 
{\llbracket cv_{i} \rrbracket,}&u_{i} \in U_{unl} 
\end{cases}
\end{align}

Given that ISAC devices $u_{i} \in U$ can either actively participate in training or request to leave by invoking the unlearning process, a status indicator $f_{i}$ is introduced to enable the UAV-assisted server $S$ to distinguish between these two states. Specifically, for normal ISAC devices $u_{i} \in U_{nor}$, the status is set as $f_{i} \leftarrow 1$, while for unlearning ISAC devices $u_{i} \in U_{unl}$, the status is set as $f_{i} \leftarrow -1$.

Building on the above operations, each ISAC device $u_{i} \in U$, upon receiving the broadcasted set $\{(i,m_{t}^{(i)})\}_{u_{i} \in U}$ from the server, packages its encrypted information together with the status indicator into $\{(i,x_{i},f_{i})\}_{u_{i} \in U}$, which is then uploaded to the UAV-assisted server. Notably, unlearning ISAC devices are not required to provide gradients during the round in which the unlearning request is made.

\subsubsection{UAV-assisted server-side unlearning}
Upon receiving the encrypted local model information and status indicators $\{(i,x_{i},f_{i})\}_{u_{i} \in U}$ from ISAC devices, the UAV-assisted server performs aggregation with unlearning directly on ciphertexts. However, since the adjustment factor may involve floating-point values, the aggregation and unlearning process on the server side is scaled by a factor of $|U|$, as expressed below:
\begin{align}
    \label{eq4-9}
    \llbracket a \rrbracket &=\llbracket v_{t}^{(i_{1})} \rrbracket \oplus \dots \oplus \llbracket v_{t}^{(i_{|U_{nor}|})} \rrbracket \ominus \llbracket cv_{j_{1}} \rrbracket \ominus \dots \ominus \llbracket cv_{j_{|U_{unl}|}} \rrbracket\notag
    \\&=\llbracket \sum_{u_{i} \in U_{nor}} v_{t}^{(i)} \rrbracket  \ominus  \llbracket \sum_{u_{j}\in U_{unl}}cv_{j} \rrbracket\notag
    \\&=\llbracket \sum_{u_{i} \in U_{nor}} v_{t}^{(i)}-\sum_{u_{j}\in U_{unl}}cv_{j} \rrbracket
\end{align}
Here, $\llbracket a \rrbracket$ represents the result of aggregation with unlearning for the current round, which remains in its encrypted form. The server then broadcasts this result $\llbracket a \rrbracket$ along with the status indicators $\{(i,f_{i})\}_{u_{i} \in U}$ to all ISAC devices.

\subsubsection{ISAC device-side decrypting}
Upon receiving the broadcasted information from the UAV-assisted server, ISAC devices are required to decrypt the aggregated results using their private key $sk$ as follows:
\begin{equation}
    \label{eq4-10}
    a \leftarrow \mathbf{HE.Dec}(\llbracket a \rrbracket)
\end{equation}
For normal ISAC devices, the decrypted aggregated results can be directly utilized to proceed with the next round of iterative training. In contrast, for unlearning ISAC devices, it is necessary to verify the unlearning results performed by the server to ensure a trustworthy departure.

\subsection{Validation for unlearning process}
To enable verifiable FUL in LAWN, unlearning ISAC devices must be able to independently verify both the integrity of committed contribution records and the correctness of the unlearning operation performed by the UAV-assisted server.
\subsubsection{ISAC Device-side Commitment Verifying}
Each ISAC device first upload the commitment opening string $\{h_{t}^{(i)},r_{i}\}$ to the UAV-assisted server after receiving the broadcasted result $a$. The unlearning ISAC devices then verify the consistency between the uploaded hash values and the decommitted hash values from all participating ISAC devices, as follows: 
\begin{equation}
    \label{eq4-11}
    \mathbf{COM.Dec}(m_{t}^{(i)},h_{t}^{(i)},r_{i}) \overset{?}{=} 1
\end{equation}
If a verification failure occurs for any ISAC device, it indicates that the tamper-proof record has been either incorrectly computed or maliciously forged, thereby compromising the integrity of the unlearning procedure.
\subsubsection{ISAC Device-side Unlearning Verifying}
Unlearning ISAC devices need to further verify the unlearning execution of the UAV-assisted server. Given that $f_{i}$ distinctly identifies both normal ISAC devices and unlearning ISAC devices, the unlearning process can be simulated through a linear combination of the homomorphic hashes of either current or historical gradient contributions from all participating ISAC devices, as follows:
\begin{equation}
    \label{eq4-12}
    H \leftarrow \mathbf{LHH.Eval}(h_{t}^{(1)},\dots,h_{t}^{(N)},f_{1},\dots,f_{N})
\end{equation}
It is crucial to emphasize that all computations are performed directly in the context of linear homomorphic hashes, thereby preserving the privacy of data for each ISAC device throughout the verification process.

Subsequently, the unlearning ISAC devices verify whether the linear homomorphic hash of the result $a$ matches the computed linear combination $H$, as formalized by the following equivalence relation:
\begin{equation}
    \label{eq4-13}
    \mathbf{LHH.Hash}(a) \overset{?}{=} H
\end{equation}
If this equivalence holds, it provides cryptographic evidence that the UAV-assisted server has faithfully executed the aggregation operation with proper unlearning. Thus, the unlearning ISAC devices can securely and verifiably exit the FUL process.

\begin{figure*}[htbp]
\begin{minipage}{0.99\textwidth}
    \fbox{
    \begin{minipage}{0.99\linewidth}
        \centering
        \vspace{0.25cm}
        \textbf{VerFU Framework}
        \vspace{0.25cm}
        \begin{itemize}
            \item \textbf{Preparation Phase:}
            \vspace{0.15cm}
            \\For each ISAC device $u_{i} \in U$ in parallel:
            \begin{itemize}
                \item Train the local model $w_t^{(i)}$ based on the previous global model $w_{t-1}$ and its own local dataset, and compute the gradient $v_t^{(i)} \leftarrow w_t^{(i)}-w_{t-1}$.
                \item Compute total historical contribution up to the current round: $cv_{i} = cv_{i}+v_{t}^{(i)}$.
                \item For each normal ISAC device $u_{i} \in U_{nor}$, compute the linear homomorphic hash of the gradient: ${h_{t}^{(i)}} \leftarrow \mathbf{LHH.Hash}(pp_{LHH},v_{t}^{(i)})$; For each unlearning ISAC device $u_{i} \in U_{unl}$, compute the linear homomorphic hash of the historical contribution: ${h_{t}^{(i)}} \leftarrow \mathbf{LHH.Hash}(pp_{LHH},cv_{i})$.
                \item Select a random string $r_{i}$ and compute a commitment: $m_{t}^{(i)} \leftarrow \mathbf{COM.Com}(pp_{COM},h_{t}^{(i)},r_{i})$.
                \item Upload $m_{t}^{(i)}$ to the server.
            \end{itemize}
            UAV-assisted server:
            \begin{itemize}
                \item Collect $m_{t}^{(i)}$ from all participating ISAC devices and broadcast $\{i,m_{t}^{(i)}\}_{u_{i} \in U}$ to the ISAC devices $u_{i} \in U$.
            \end{itemize}
            \vspace{0.15cm}
            
            \item \textbf{Aggregation and Unlearning Phase:}
            \vspace{0.15cm}
            \\For each ISAC device $u_{i} \in U$ in parallel:
            \begin{itemize}
                \item Upon receiving $\{i,m_{t}^{(i)}\}_{u_{i} \in U}$ by the server, normal ISAC devices encrypt gradients as $\llbracket v_{t}^{(i)} \rrbracket \leftarrow \mathbf{HE.Enc}(pk,v_{t}^{(i)})$, while unlearning ISAC devices encrypt historical contributions as $\llbracket cv_{i} \rrbracket \leftarrow \mathbf{HE.Enc}(pk,cv_{i})$.
                \item Set $f_i \leftarrow 1$ for normal ISAC devices and $f_i \leftarrow -1$ for unlearning ISAC devices.
                \item Upload $x_i$ and $f_i$ to the server, where $x_i=\llbracket v_{t}^{(i)} \rrbracket$ for normal ISAC devices and $\llbracket cv_{i} \rrbracket$ for unlearning ISAC devices.
            \end{itemize}
            UAV-assisted server:
            \begin{itemize}
                \item Collect the uploaded information $\{i,x_i,f_i\}$ from each ISAC device.
                \item When $f_i=1$, perform $\oplus$ operation on $x_i$; When $f_i=-1$, perform $\ominus$ operation on $x_i$.
                \item Compute the aggregation and unlearning result $\llbracket a \rrbracket =\llbracket v_{t}^{(i_{1})} \rrbracket \oplus \dots \oplus \llbracket v_{t}^{(i_{|U_{nor}|})} \rrbracket \ominus \llbracket cv_{j_{1}} \rrbracket \ominus \dots \ominus \llbracket cv_{j_{|U_{unl}|}} \rrbracket$ and broadcast $\llbracket a \rrbracket$ to all ISAC devices $u_i \in U$.
            \end{itemize}
            \vspace{0.15cm}
            
            \item \textbf{Verification Phase:}
            \vspace{0.15cm}
            \\For each ISAC device $u_{i} \in U$ in parallel:
            \begin{itemize}
                \item Upon receiving $\llbracket a \rrbracket$ from the server, decrypt the aggregation and unlearning result $a \leftarrow \mathbf{HE.Dec}(\llbracket a \rrbracket)$.
                \item Upload $\{h_{t}^{(i)},r_{i}\}$ to the server.
            \end{itemize}
            UAV-assisted server:
            \begin{itemize}
                \item Collect the opening string $\{i,h_{t}^{(i)},r_{i}\}$ and broadcast them to unlearning ISAC devices $u_i \in U_{unl}$.
            \end{itemize}
            For each ISAC device $u_{i} \in U_{unl}$ in parallel:
            \begin{itemize}
                \item Upon receiving $\{i,h_{t}^{(i)},r_{i}\}$, verify whether $\mathbf{COM.Dec}(m_{t}^{(j)},h_{t}^{(j)},r_{j}) = 1$ holds for all other ISAC devices $u_{j} \in U \backslash \{i\}$. If the condition is satisfied, continue; otherwise, output a verification failure.
                \item Compute the combination of the linearly homomorphic hashes from all ISAC devices as 
                \\$H \leftarrow \mathbf{LHH.Eval}(h_{t}^{(1)},\dots,h_{t}^{(N)},f_{1},\dots,f_{N})$.
                \item Verify whether $\mathbf{LHH.Hash}(a) = H$ holds. If the condition is satisfied, unlearning ISAC devices can trustfully exit; otherwise, output a verification failure.
            \end{itemize}
            \vspace{0.15cm}
    \end{itemize}
    \end{minipage}
}
\end{minipage}
\caption{The complete workflow of VerFU.}
\label{fig:full version}
\end{figure*}
\section{Security Analysis} \label{sec5}
In this section, we present the security analysis of VerFU, focusing on its ability to preserve gradient privacy while ensuring the integrity of unlearning results. The analysis is conducted under the architecture described in Section \ref{sec3}, where a set of ISAC devices collaboratively train a model under the coordination of UAV-assisted servers. We consider a set $U$ of $N$ ISAC devices and a UAV-assisted server $S$, with a primary focus on addressing potential security threats posed by an untrusted UAV-assisted server. The adversarial entity is modeled as a set $M$, including the UAV-assisted server and external attackers.

We provide formal definitions for the collision resistance of the linear homomorphic hash function, as well as the binding and equivocality properties of the commitment scheme.

\begin{definition}[Collision Resistance of Linear Homomorphic Hash]\label{def1}
Consider the following collision experiment, parameterized by a probabilistic polynomial-time (PPT) adversary $M$, a security parameter $\kappa$, and the vector dimension $d$.

$\mathbf{LHH\mbox{-}EXP}^{coll}_{M}(\kappa,d)$:
\begin{enumerate}
    \item $pp_{LHH} \leftarrow \mathbf{LHH.Gen}(\kappa,d)$
    \item $(x_1,x_2) \leftarrow M(pp_{LHH})$, where $x_1,x_2 \in \mathbb{F}_q^d$ and \\$x_1 \neq x_2$
    \item Output 1 if $\mathbf{LHH.Hash}(x_1) = \mathbf{LHH.Hash}(x_2)$, otherwise output 0.
\end{enumerate}
The LHH function satisfies collision resistance if for any PPT adversary $M$, the advantage $Adv^{coll}_{M,LHH}(\kappa,d)$ in finding a collision is bounded by a negligible function $\epsilon(\cdot)$:
\begin{equation}
    \label{eq5-1}
    Adv^{coll}_{M,COM}(\kappa,d):=\mathrm{Pr}[\mathbf{LHH\mbox{-}EXP}^{coll}_{M}(\kappa,d)=1] \leq \epsilon(\kappa) \notag
\end{equation}
\end{definition}

\begin{definition}[Binding of Commitment Scheme]\label{def2}
Given the security parameter $\kappa$, a commitment scheme is said to satisfy the binding property if, for any PPT adversary $M$, there exists a negligible function $\epsilon(\cdot)$ such that the following holds:
\begin{equation}
    \label{eq5-2}
    \Pr\left[b=1 \,\middle|\, 
    \begin{array}{l}
    (td,pp_{COM}) \leftarrow \mathbf{COM.Gen}(\kappa), \\
    (m,x,r,x',r') \leftarrow M(pp_{COM}), \\
    b \leftarrow \mathbf{COM.Dec}(m,x,r) \\\land \mathbf{COM.Dec}(m,x',r') \land x \neq x'
    \end{array}
    \right] \leq \epsilon(\kappa) \notag
\end{equation}
\end{definition}

\begin{definition}[Equivocality of Commitment Scheme]\label{def3}
Given the security parameter $\kappa$, a commitment scheme is said to satisfy the equivocality property if, for any message $x$ and uniformly random $r$, there exists a negligible function $\epsilon(\cdot)$ such that the following holds:
\begin{equation}
    \label{eq5-3}
    \Pr\left[b=0 \,\middle|\, 
    \begin{array}{l}
    (td,pp_{COM}) \leftarrow \mathbf{COM.Gen}(\kappa), \\
    m \leftarrow \mathbf{COM.Com}(x,r), \\
    x' \leftarrow \mathcal{X}, \\
    r' \leftarrow \mathbf{COM.Equ}(m, (x, r), x',td), \\
    b \leftarrow \mathbf{COM.Dec}(m,x',r')
    \end{array} 
\right] \leq \epsilon(\kappa) \notag
\end{equation}
\end{definition}

\subsection{Integrity of Unlearning}
The integrity of unlearning results ensures that, after all participating ISAC devices upload the linear homomorphic hashes of their gradients, any adversary attempting to disrupt honest unlearning can be detected with overwhelming probability. This property allows unlearning ISAC devices to verify the results with confidence and securely exit the FUL process.

\begin{definition}[Integrity of Unlearning]\label{def4}
In round $t$, let $\bigoplus_{i \in U_{nor}} v_{t}^{(i)}$ denote the ciphertext space aggregation of the encrypted gradients from normal ISAC devices $u_{i} \in U_{nor}$, and let $\bigoplus_{j \in U_{unl}} cv_{j}$ represent the ciphertext space aggregation  of the encrypted historical contributions from unlearning ISAC devices $u_{j} \in U_{unl}$. Encrypted values are uploaded during the \textbf{Aggregation and Unlearning Phase}. Additionally, $M(t, msg_{t},r_{t})$ describes the output of the adversary in round $t$, where $msg_{t}$ represents the transcript of messages observed so far, and $r_{t}$ denotes the joint randomness executed by the adversary. The integrity of the unlearning process is guaranteed if the probability that the adversary causes verifying ISAC devices to accept a forged unlearning result is negligible, as formalized by the following condition:

\begin{equation}
\label{eq5-4}
\Pr 
\begin{aligned}[t]
    & \left[ u_{i} \textit{ outputs } \perp \,\middle|\, 
    \begin{aligned}
        & u_{i} \in U_{unl}, \textit{ for round } t, \\
        & \llbracket a \rrbracket \leftarrow \Bigl(\bigoplus_{i \in U_{nor}} v_{t}^{(i)}\Bigr) \ominus \Bigl(\bigoplus_{j \in U_{unl}} cv_{j}\Bigr), \\
        & \llbracket a' \rrbracket \leftarrow M(t, msg_{t}, r_{t}), \\
        & a \leftarrow \mathbf{HE.Dec}(\llbracket a \rrbracket), \\
        & a' \leftarrow \mathbf{HE.Dec}(\llbracket a' \rrbracket), \\
        & a' \neq a
    \end{aligned} \right] \\
    & \geq 1 - \epsilon(\kappa)
\end{aligned}
\notag
\end{equation}
where $\epsilon(\cdot)$ is a negligible function, $\ominus$ is the minus operation in the ciphertext space, and $\kappa$ represents the security parameter.
\end{definition}

\begin{theorem}
Under the assumptions of the hardness of the discrete logarithm problem and the security of the commitment scheme, VerFU achieves unlearning integrity as formally defined in Definition \ref{def4}.
\end{theorem}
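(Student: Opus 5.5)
We argue by reduction: assume some PPT adversary $M$ makes an unlearning ISAC device $u_i \in U_{unl}$ accept, in some round $t$, a forged result $a' \neq a$ with non-negligible probability $\delta(\kappa)$. From $M$ we build either an adversary breaking the binding property of the commitment scheme (Definition \ref{def2}) or an adversary finding an LHH collision (Definition \ref{def1}). Collision resistance of LHH in turn reduces to the discrete logarithm problem in the group $G$ of prime order $q$, via the standard argument. Given a DL instance $(g,y)$, embed $y$ into randomly re-randomized generators $g_j = g^{\alpha_j} y^{\beta_j}$. A collision $x_1 \neq x_2$ then yields $g^{\langle \alpha, x_1-x_2\rangle} = y^{-\langle \beta, x_1-x_2\rangle}$. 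With overwhelming probability $\langle \beta, x_1-x_2\rangle \neq 0 \bmod q$, which lets us solve for $\log_g y$.

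The case split follows the two checks an unlearning device performs.

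\emph{Case 1: the commitment check passes with some opened hash differing from the committed one.} Acceptance requires $\mathbf{COM.Dec}(m_t^{(j)},h_t^{(j)},r_j)=1$ for all $u_j$. Suppose the opening the verifier sees for some honest device $u_j$ is $(\tilde h_t^{(j)}, \tilde r_j) \neq (h_t^{(j)}, r_j)$ with $\tilde h_t^{(j)} \neq h_t^{(j)}$. Together with the honest opening, this gives two valid openings of the same $m_t^{(j)}$ to different messages. The reduction $B_1$ receives $pp_{COM}$, simulates all honest devices itself (it generates $pp_{LHH}$, the HE keys, and the gradients), and guesses which $j$ is tampered. This costs a factor $N$, so it breaks binding with probability at least $\delta/N$. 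Because commitments are broadcast before any ciphertext or opening is revealed, the server cannot adapt the committed hashes to $a'$. This timing is the point of the commit-then-open order.

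\emph{Case 2: all opened hashes equal the honestly committed $h_t^{(j)}$.} Then $H = \prod_j (h_t^{(j)})^{f_j}$, where the $f_j$ are the broadcast status indicators. Honest devices check their own $f_j$, and we argue the indicators must match the true states. By the linearity of $\mathbf{LHH.Hash}$ (exponent arithmetic in $G$),
\[
H = \mathbf{LHH.Hash}\Bigl(\sum_{u_j \in U_{nor}} v_t^{(j)} - \sum_{u_j\in U_{unl}} cv_j\Bigr) = \mathbf{LHH.Hash}(a),
\]
using the correctness of Paillier decryption for $a$. Acceptance of $a'$ means $\mathbf{LHH.Hash}(a') = H = \mathbf{LHH.Hash}(a)$ with $a' \neq a$, which is a collision. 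The reduction $B_2$ receives $pp_{LHH}$, runs the honest protocol, outputs $(a,a')$, and wins with the probability of Case 2.

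Combining the cases gives $\delta \le N\cdot Adv^{bind} + Adv^{coll}_{M,LHH} \le \epsilon(\kappa)$, which establishes Definition \ref{def4}.

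The main obstacle I expect is \textbf{domain mismatch}. The vectors $a, a' \in \mathbb{Z}_n^d$ are Paillier plaintexts, possibly with negative entries and modular wraparound. The LHH, however, is defined over $\mathbb{F}_q^d$. To get a genuine collision we need $a \not\equiv a' \pmod q$, not merely $a \neq a'$ over the integers.

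I would first handle this by fixing an encoding. Gradients are fixed-point integers with bounded magnitude, chosen so that all honest sums lie in a range smaller than both $q$ and $n$. The verifier would reject any $a'$ outside this range, and inside it, distinct values remain distinct mod $q$. A secondary subtlety is manipulation of the broadcast indicators $f_i$. Each honest device checks its own $f_i$ and the membership list, so any altered indicator is detected; equivalently, that case is absorbed into the binding reduction by also committing to $f_i$.
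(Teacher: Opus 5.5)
Your proposal is correct to the same standard as the paper's proof and follows the same route. Binding pins the opened hashes to the honest $h_t^{(j)}$, linearity of $\mathbf{LHH.Hash}$ gives $H=\mathbf{LHH.Hash}(a)$, and acceptance of $a'\neq a$ then becomes an LHH collision. You improve on the paper in three ways: you make the DL-to-collision-resistance reduction explicit, you use binding alone instead of the paper's appeal to equivocality, and you add a range check on $a'$. That check matters: without it (when $q<n$), adding $q$ to one coordinate of $a$ passes verification, so the paper's passage from $\bar{cv}\neq\sum_{j\in U_{unl}}cv_j$ to Eq.~(\ref{eq5-9}) fails as written.

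Like the paper, however, you implicitly assume that the server relays the commitments, the participant list and the flags $f_j$ faithfully. This should be stated as an explicit authenticated-broadcast assumption rather than argued via commit-then-open timing or own-flag checks. A server could insert a phantom participant carrying copies of $u_i$'s commitment $m_t^{(i)}$, ciphertext $\llbracket cv_i\rrbracket$ and opening $(h_t^{(i)},r_i)$ with flag $+1$. This cancels the unlearning and passes both checks without breaking binding or finding a collision.
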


\begin{proof}
Assume the existence of a PPT adversary that successfully outputs a forged unlearning result $a' \neq a$ during the \textbf{Aggregation and Unlearning Phase}, meaning the forged result passes verification. For this to occur, the decommitment step must have been completed without errors. ISAC devices $u_{i} \in U$ upload their corresponding commitment opening strings $(h_{t}^{(i)},r_{i})$, which include the linear homomorphic hashes of gradients or historical contributions as well as random strings, to the UAV-assisted server. Upon receiving these strings, the server broadcasts them to the unlearning ISAC devices. However, the adversary may attempt to interfere by transmitting incorrect commitment opening strings. Due to the binding property and equivocality of the commitment scheme (Definition \ref{def2} and Definition \ref{def3}), the probability that unlearning ISAC devices fail to verify $\mathbf{COM.Dec}(m_{t}^{(j)},h_{t}^{(j)},r_{j}) = 1$ for all other ISAC devices $u_{j} \in U \backslash \{i\}$ is non-negligible. This indicates that the adversary cannot modify the valid linear homomorphic hashes uploaded by honest ISAC devices without being detected.

Furthermore, successful verification requires the unlearning result $a$ and the final hash $H$ to satisfy $\mathbf{LHH.Hash}(a)=H$. Under the condition that the linear homomorphic hash remains unaltered, the verification involves a linear combination expressed as:
\begin{align}
    \label{eq5-5}
    H&=\mathbf{LHH.Eval}(h_{t}^{(1)},\dots,h_{t}^{(N)},f_{1},\dots,f_{N})\notag\\
    &=\prod_{i \in U} {(h_{t}^{(i)})}^{f_{i}}\notag\\
    &=\frac{\prod_{i \in U_{nor}} h_{t}^{(i)}}{\prod_{j \in U_{unl}} h_{t}^{(j)}}\notag\\
    &=\frac{\prod_{k \in [d]} g_{k}^{\sum_{i \in U_{nor}}v_{t}^{(i)}[k]}}{\prod_{k \in [d]} g_{k}^{\sum_{j \in U_{unl}}cv_{j}[k]}}
\end{align}
where $v_{t}^{(i)}$ and $cv_{j}$ denote the gradients and historical contributions of normal and unlearning ISAC devices, respectively. Besides, the vector dimension is $d$. Simultaneously, the forged unlearning result $a'$ can be represented as:
\begin{align}
    \label{eq5-6}
    a'=\sum_{i \in U_{nor}} v_{t}^{(i)}-\bar{cv}
\end{align}
where $\bar{cv}$ corresponds to the partial result after the adversary has not correctly or partially unlearned. Consequently, the linear homomorphic hash is computed as:
\begin{align}
    \label{eq5-7}
    \mathbf{LHH.Hash}(a')&=\mathbf{LHH.Hash}(\sum_{i \in U_{nor}} v_{t}^{(i)}-\bar{cv})\notag\\
    &=\frac{\prod_{k \in [d]} g_{k}^{\sum_{i \in U_{nor}}v_{t}^{(i)}[k]}}{\prod_{k \in [d]} g_{k}^{\bar{cv}[k]}}
\end{align}
where $d$ represents the vector dimension.

Given our assumption that the forged unlearning result $a'$ of the adversary passes verification, the equality $\mathbf{LHH.Hash}(a')=H$ must hold. Substituting the expressions for $H$ (Eq. (\ref{eq5-5})) and $\mathbf{LHH.Hash}(a')$ (Eq. (\ref{eq5-7})), we derive:
\begin{align}
    \label{eq5-8}
    \prod_{k \in [d]} g_{k}^{\bar{cv}[k]}=\prod_{k \in [d]} g_{k}^{\sum_{j \in U_{unl}}cv_{j}[k]}
\end{align}
However, since $a' \neq a$, it follows from Eq. (\ref{eq5-6}) that $\bar{cv} \neq \sum_{j \in U_{unl}} cv_{j}$. This implies:
\begin{align}
    \label{eq5-9}
    \prod_{k \in [d]} g_{k}^{\bar{cv}[k]} \neq \prod_{k \in [d]} g_{k}^{\sum_{j \in U_{unl}}cv_{j}[k]}
\end{align}
which directly contradicts the derived equality above.

This contradiction disproves our initial assumption. In other words, due to the binding property of the commitment scheme and the collision resistance of the linear homomorphic hash, the probability that a PPT adversary can deceive verifying ISAC devices into accepting a forged unlearning result is negligible. This completes the proof.
\end{proof}

\subsection{Privacy of Input Gradients}
We primarily consider the privacy of ISAC device input gradients in the presence of an untrusted UAV-assisted server and external adversaries. Our protocol ensures that adversaries can only access encrypted gradients and unlearning results, while remaining oblivious to the true input gradients of individual ISAC devices.

\begin{theorem}
In VerFU, no PPT adversary $M$ can infer any true input gradients of ISAC devices from either a single gradient ciphertext or the unlearning result ciphertext.
\end{theorem}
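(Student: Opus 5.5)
The plan is a reduction to the IND-CPA security of the Paillier cryptosystem, which rests on the decisional composite residuosity (DCR) assumption, combined with a simulation-style argument. First I will fix what the adversary $M$ sees in one round $t$. This is its view $msg_t$, consisting of:
\begin{itemize}
\item the commitments $\{m_t^{(i)}\}$;
\item the ciphertexts $x_i$ (either $\llbracket v_t^{(i)} \rrbracket$ or $\llbracket cv_i \rrbracket$) together with the indicators $f_i$;
\item the aggregated ciphertext $\llbracket a \rrbracket$.
\end{itemize}
The claim concerns the two kinds of ciphertext listed in the statement. I will treat them as two cases: a single gradient or contribution ciphertext, and the unlearning result ciphertext.

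\emph{Single ciphertext.} Suppose $M$ can infer a nontrivial function of some $v_t^{(i)}$, or distinguish which of two candidate gradients was encrypted, from $\llbracket v_t^{(i)} \rrbracket = g^{v_t^{(i)}} r^n \bmod n^2$. I will build a distinguisher $\mathcal{B}$ for the Paillier IND-CPA game.
\begin{itemize}
\item $\mathcal{B}$ receives $pk$ and submits two candidate gradients $v^0, v^1$ as challenge messages.
\item It embeds the challenge ciphertext as $x_i$.
\item It generates all other ciphertexts itself with $pk$, since encryption is public.
\item It forwards $M$'s guess as its own.
\end{itemize}
The advantage of $\mathcal{B}$ equals that of $M$, so it is negligible under DCR. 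The reason is that the factor $r^n$, with $r$ uniform in $\mathbb{Z}_n^\ast$, is computationally indistinguishable from a random element of $\mathbb{Z}_{n^2}^\ast$. The same argument applies verbatim to $\llbracket cv_i \rrbracket$.

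\emph{Unlearning result.} Here I will argue by a hybrid. The aggregated ciphertext $\llbracket a \rrbracket$ is computed solely from the uploaded ciphertexts by the public operations $\oplus$ and $\ominus$. Its randomness is therefore the product $\prod r_i^{\pm n}$, which is again an $n$-th residue. I will replace the uploaded ciphertexts one at a time with encryptions of zero, each step justified by IND-CPA. This shows that $\llbracket a \rrbracket$ reveals nothing about any individual $v_t^{(i)}$ or $cv_j$ beyond what the aggregate plaintext would reveal, and the server never holds $sk$ to obtain that plaintext. I also need to argue that the other public values carry no information, so that the hybrid captures everything $M$ sees:
\begin{itemize}
\item the indicators $f_i$ are public metadata;
\item the commitments $m_t^{(i)}$ hide the hashes by the equivocality property of Definition \ref{def3}, since a simulator holding $td$ can commit to a dummy value and later open it consistently.
\end{itemize}

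The main obstacle is the verification phase. There the openings $h_t^{(i)}=\prod_k g_k^{v_t^{(i)}[k]}$ are broadcast, and a deterministic hash of a low-entropy gradient leaks information through a guess-and-check attack. The statement as written covers only the ciphertexts, so I will restrict the formal claim to $\llbracket \cdot \rrbracket$ values and equivocal commitments. I will then add a remark that recovering $v_t^{(i)}$ from $h_t^{(i)}$ requires solving a multi-base discrete logarithm, which is infeasible for high-entropy gradient vectors. If needed, I would strengthen this by noting that the openings go only to unlearning devices, which are assumed honest.
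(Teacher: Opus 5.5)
Your argument is sound for the statement as literally posed, and its core is the same as the paper's: both rest on the IND-CPA security of Paillier under DCR. The paper's proof, however, is only a sketch. It asserts that each $x_i$ is indistinguishable from a simulated value. It then adds that $\mathbf{LHH.Hash}(\cdot)$ ``preserves no statistical information'' about $v_t^{(i)}$ or $cv_i$ because of collision resistance, and concludes that nothing leaks even given the unlearning result.

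You do three things differently. You give an explicit reduction $\mathcal{B}$ for a single ciphertext. You note that $\llbracket a \rrbracket$ is obtained from the uploaded ciphertexts by public operations, so a hybrid over the $x_i$ covers it. And you deliberately exclude the hash openings from the formal claim. That last choice is the right one. Collision resistance is not a hiding property; if anything it says $h_t^{(i)}$ computationally pins down its preimage. Because $\mathbf{LHH.Hash}$ is deterministic, anyone holding $h_t^{(i)}$ can test candidate gradients, so no indistinguishability statement can hold for a view that contains it. The paper's route nominally claims more (the whole transcript, hashes included), but that extra part is unsupported. Yours proves exactly the ciphertext-only claim.

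Two of your side remarks need correcting.

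First, your fallback that the openings ``go only to unlearning devices'' contradicts the protocol. Every device uploads $\{h_t^{(i)},r_i\}$ to the UAV-assisted server, which rebroadcasts them. So the adversarial server sees every $h_t^{(i)}$, and hence $H$, which equals $\mathbf{LHH.Hash}(a)$ in an honest run. The guess-and-check leakage you identified therefore cannot be patched this way.

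Second, Definition~\ref{def3} only guarantees that an equivocated opening verifies. It does not say that commitments to different messages are indistinguishable, nor that equivocated randomness looks like honest randomness, so it does not make $m_t^{(i)}$ hiding. This matters if you keep the commitments in $msg_t$: in your single-ciphertext reduction, $\mathcal{B}$ would have to output $m_t^{(i)}$ for the hash of a challenge plaintext it does not know. Either drop the commitments from the view (the statement does not need them), or assume a hiding commitment explicitly, e.g., a Pedersen-style perfectly hiding one.
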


\begin{proof}
Assume the adversary $M$ gains access to: 1) Encrypted gradients or historical contributions $\{x_{i}\}_{u_{i} \in U}$; 2) The encrypted unlearning result $\llbracket a \rrbracket$. The Paillier homomorphic encryption scheme satisfies indistinguishability under chosen-plaintext attacks (IND-CPA), and its security relies on the hardness of the Decisional Composite Residuosity problem, as formally proven in prior work \cite{paillier1999public}. Under the IND-CPA security of the encryption scheme and the hardness of the discrete logarithm problem, the adversary
$M$ cannot distinguish between the real value $x_{i}$ and a simulated value.
Furthermore, the linear homomorphic hash function $\mathbf{LHH.Hash}(\cdot)$ preserves no statistical information about individual gradients $v^{(i)}$ or historical contributions $cv_{i}$ due to its collision resistance (Definition \ref{def1}). Consequently, even with auxiliary knowledge of the unlearning result, the advantage of $M$ in recovering any original input of ISAC devices remains negligible in the security parameter $\kappa$. Overall, the adversary cannot infer the true uploaded content of any individual ISAC device based on its available information. This completes the proof of input gradient privacy in VerFU framework.  
\end{proof}

\section{Performance Evaluation} \label{sec6}

\subsection{Experimental Setup}
\subsubsection{Datasets and Model}
To evaluate the effectiveness of VerFU in LAWN, we adopt image classification tasks as a representative sensing workload, which commonly arises in LAWN, such as vehicle perception, roadside monitoring, and aerial visual inspection. In these scenarios, ISAC devices continuously collect visual data and collaboratively train lightweight perception models at the network edge. Specifically, we conduct experiments on three standard benchmarks, MNIST \cite{deng2012mnist}, Fashion-MNIST \cite{xiao2017fashion}, and EMNIST \cite{cohen2017emnist}, which are widely used in FL studies to assess system behavior under heterogeneous and distributed data settings. For all experiments, we use a convolutional neural network (CNN) based on the LeNet architecture \cite{wortsman2020supermasks}. LeNet is a lightweight model composed of two convolutional layers, one max-pooling layer, and two fully connected layers, making it suitable for deployment on resource-constrained ISAC devices and consistent with the edge intelligence requirements of LAWN.

\subsubsection{Implementation Details}
We simulate an LAWN-based FL environment with 500 ISAC devices, each representing a mobile sensing entity. To reflect the inherent data heterogeneity across devices in LAWN, local datasets are generated using a Dirichlet distribution with parameter $\alpha=1$, resulting in a non-IID data distribution. The FUL process consists of 100 global communication rounds. In each round, 20 ISAC devices are randomly selected to participate in local training and upload encrypted model updates to the UAV-assisted server, capturing the partial participation and dynamic connectivity characteristics of LAWN. Each selected device performs local training for 5 epochs using stochastic gradient descent (SGD) with a learning rate of 0.01. To emulate device mobility and privacy-driven departures in LAWN, we incorporate an unlearning mechanism with unlearning rates of {0\%,10\%,20\%,40\%}. For example, under a 20\% unlearning rate, five ISAC devices are randomly selected every five rounds to request the removal of their historical contributions, modeling scenarios where devices leave the coverage region or request unlearning proactively.

\begin{figure*}[tbp]
\centering
\begin{subfigure}[b]{0.31\textwidth}
    \centering
    \includegraphics[width=\linewidth]{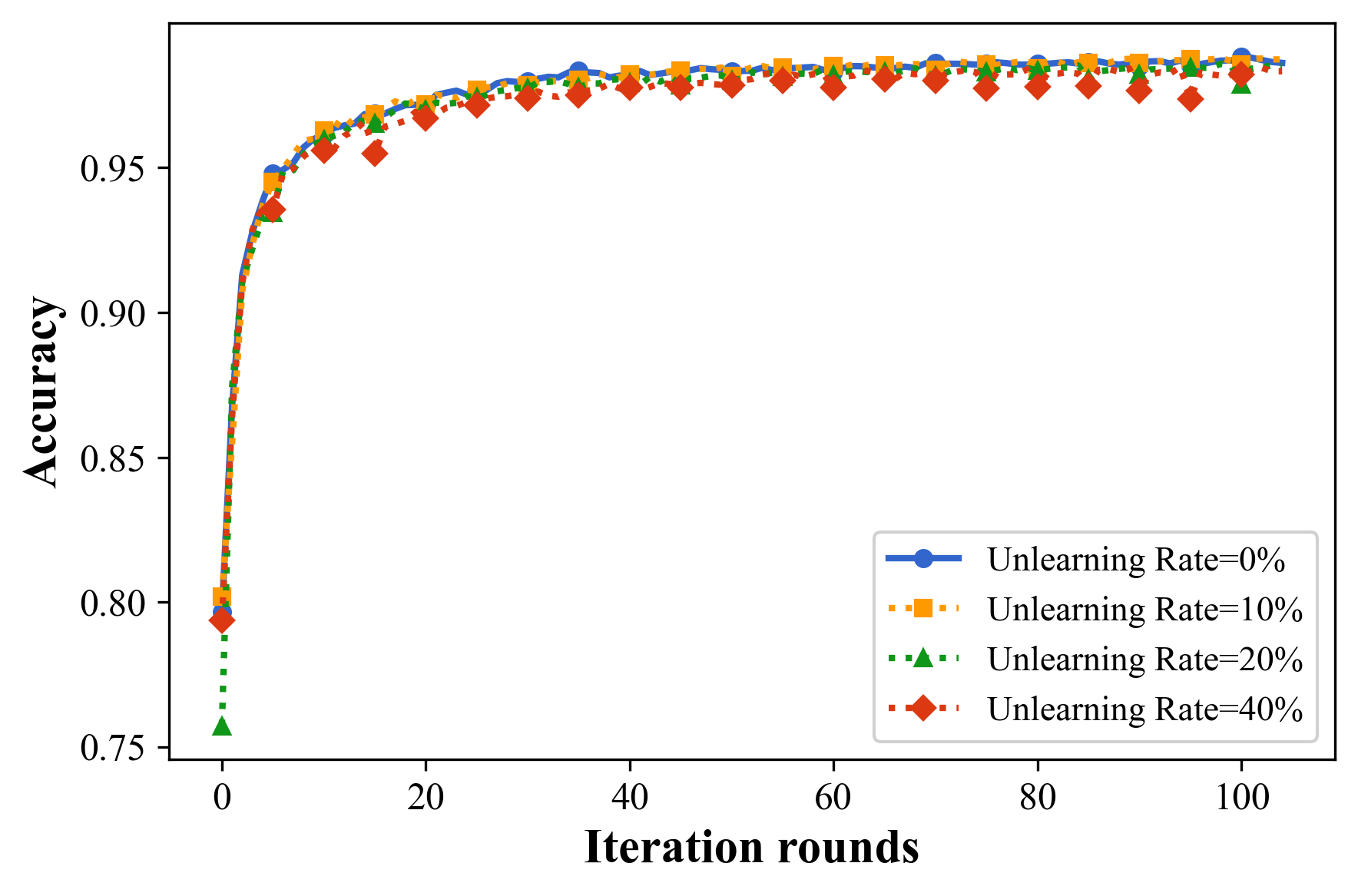}
    \caption{MNIST}
\end{subfigure}
\begin{subfigure}[b]{0.31\textwidth}
    \centering
    \includegraphics[width=\linewidth]{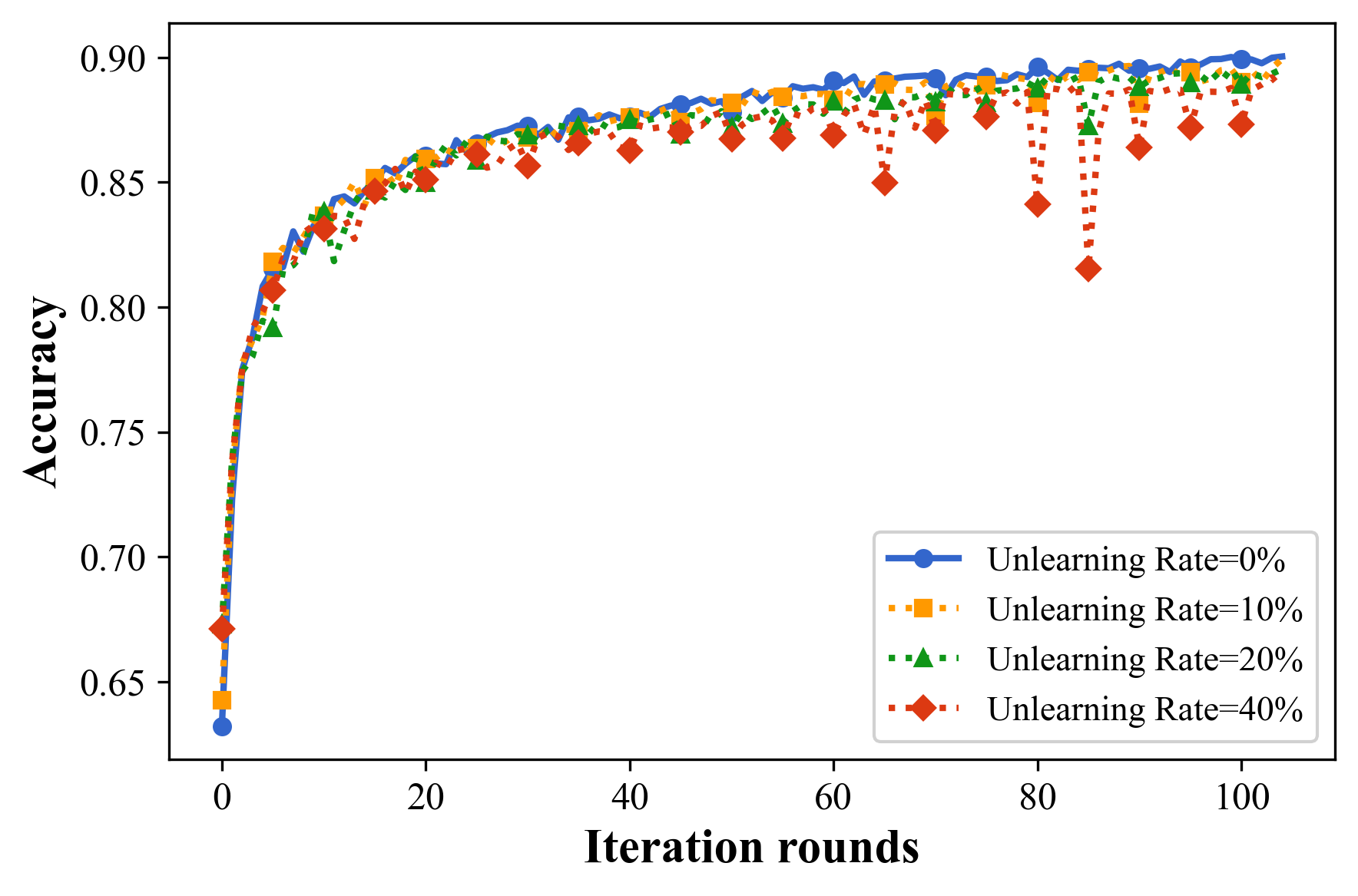}
    \caption{Fashion-MNIST}
\end{subfigure}
\begin{subfigure}[b]{0.31\textwidth}
    \centering
    \includegraphics[width=\linewidth]{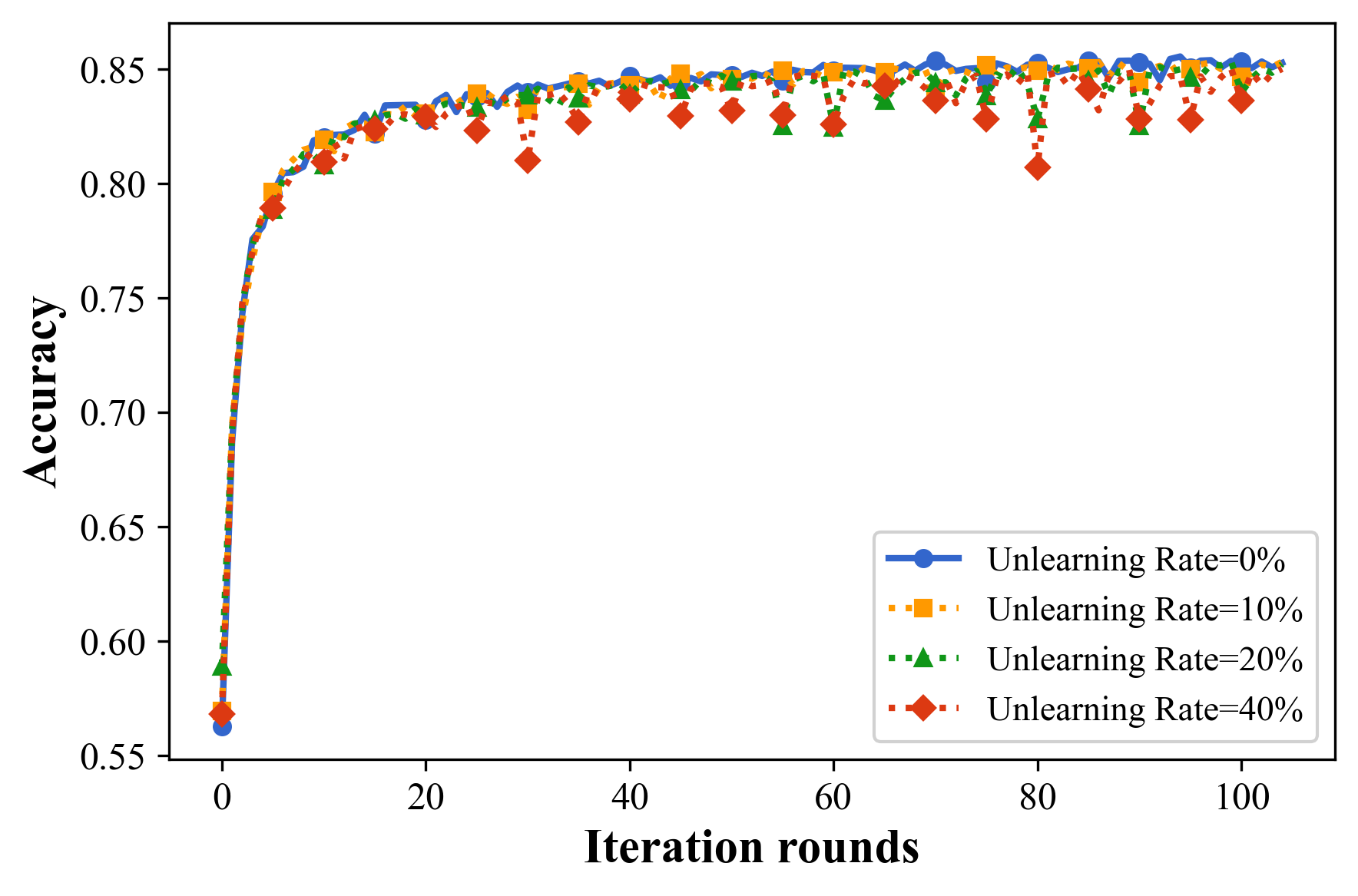}
    \caption{EMNIST}
\end{subfigure}
\caption{Accuracy performance under dynamic ISAC device unlearning in VerFU.}
\label{fig:acc}
\end{figure*}
\begin{figure*}[tbp]
\centering
\begin{subfigure}[b]{0.31\textwidth}
    \centering
    \includegraphics[width=\linewidth]{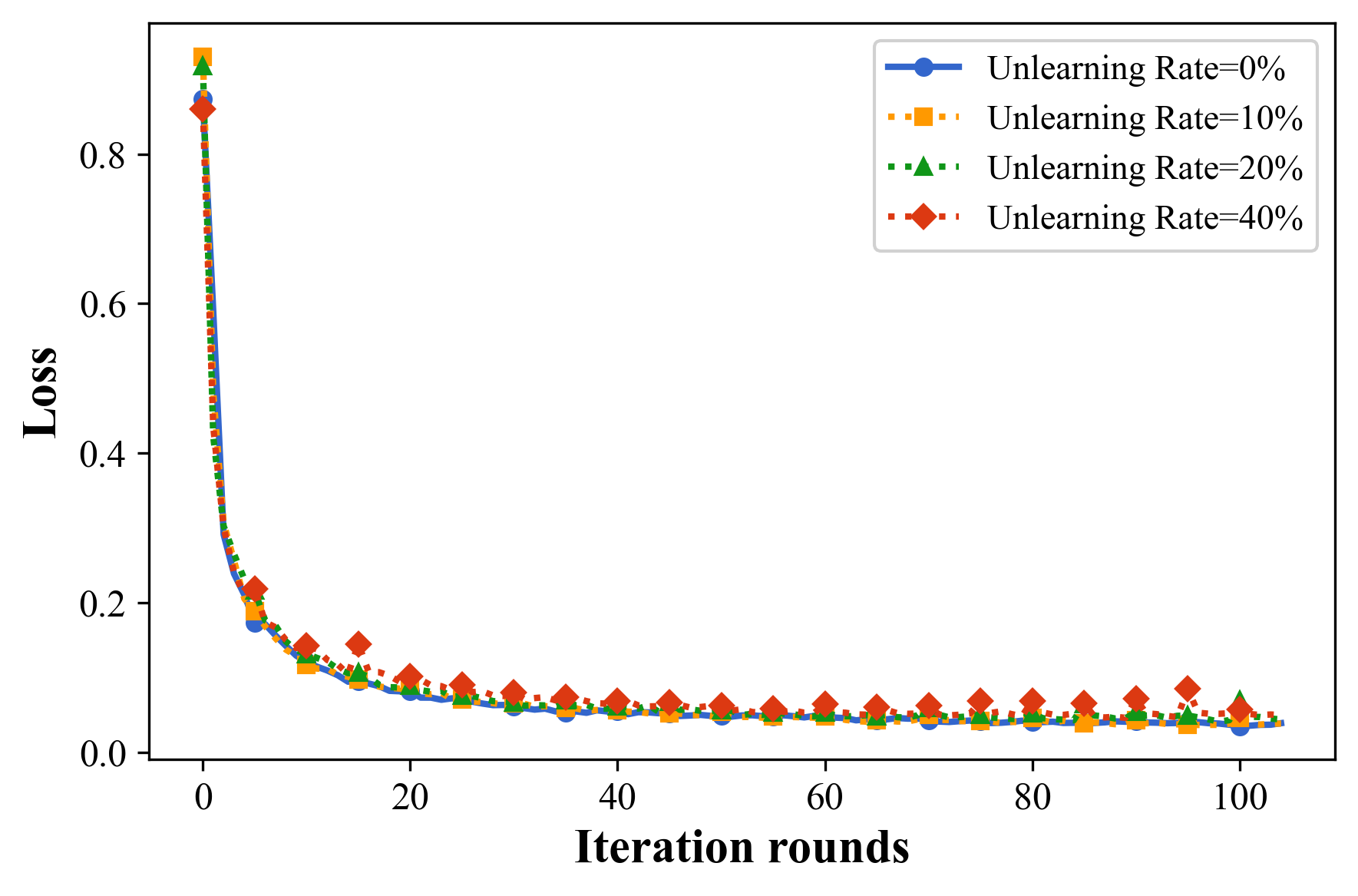}
    \caption{MNIST}
\end{subfigure}
\begin{subfigure}[b]{0.31\textwidth}
    \centering
    \includegraphics[width=\linewidth]{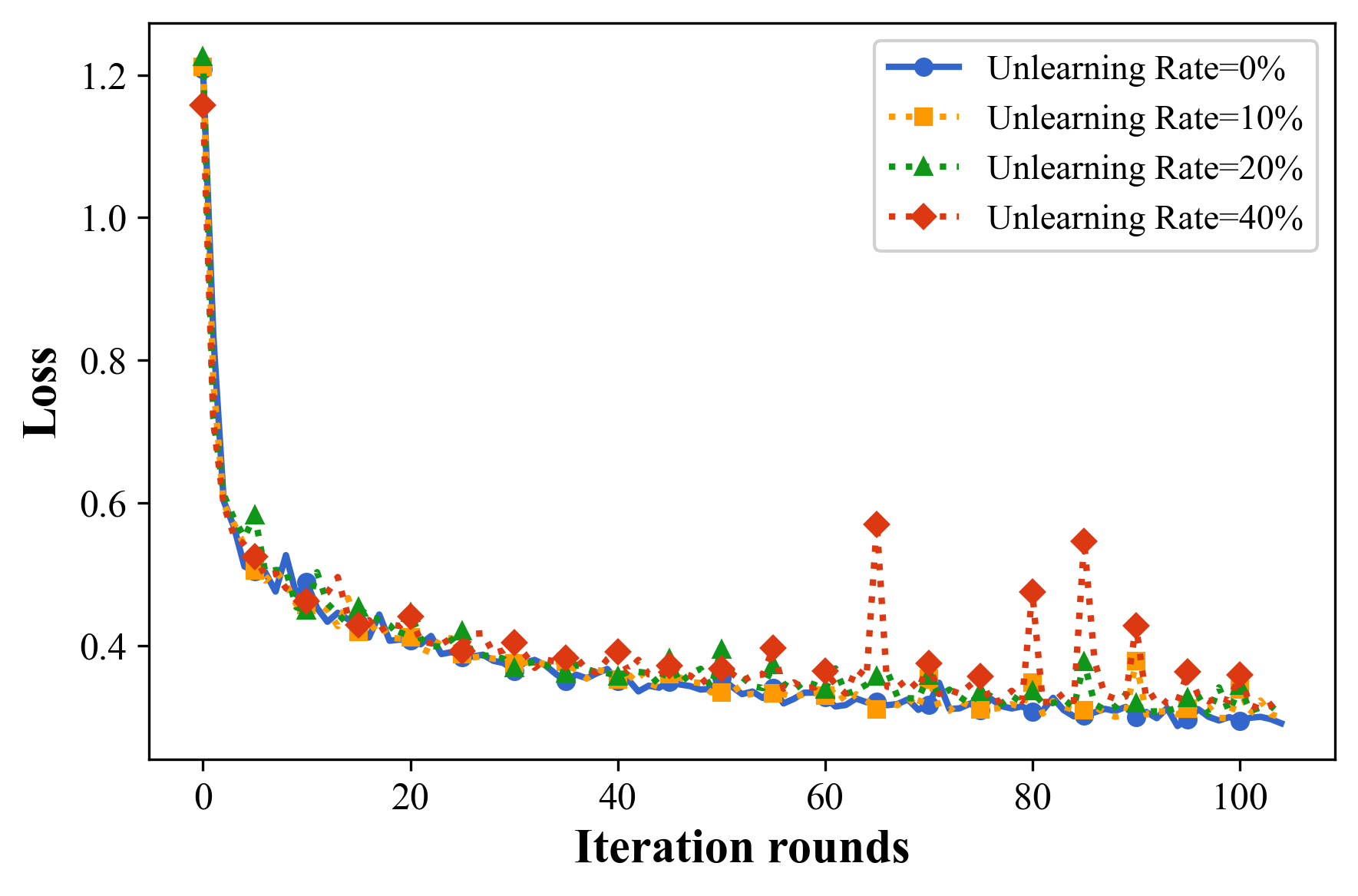}
    \caption{Fashion-MNIST}
\end{subfigure}
\begin{subfigure}[b]{0.31\textwidth}
    \centering
    \includegraphics[width=\linewidth]{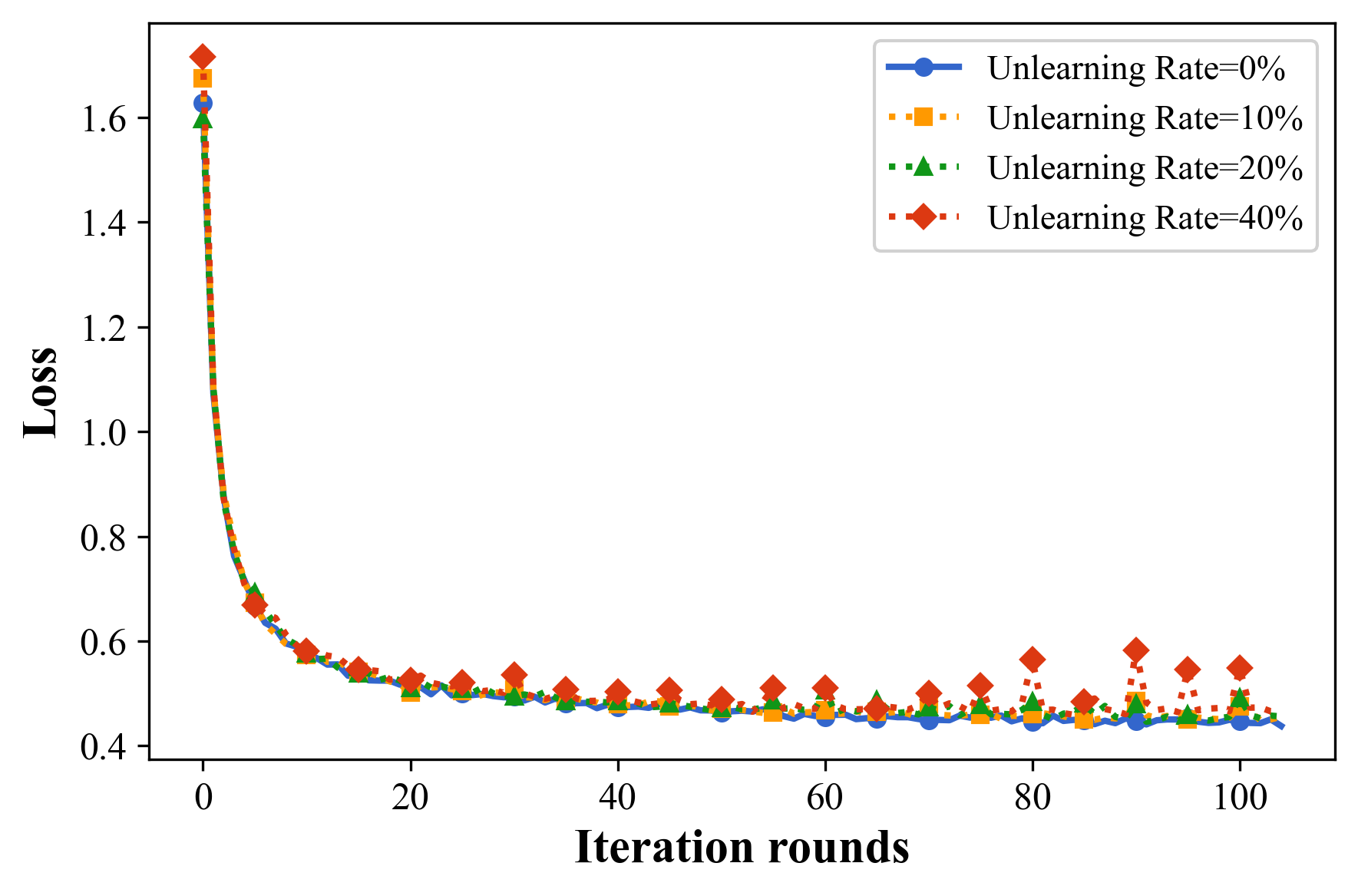}
    \caption{EMNIST}
\end{subfigure}
\caption{Loss comparison under dynamic ISAC device unlearning in VerFU.}
\label{fig:loss}
\end{figure*}
\begin{figure*}[tbp]
\centering
\begin{subfigure}[b]{0.31\textwidth}
    \centering
    \includegraphics[width=\linewidth]{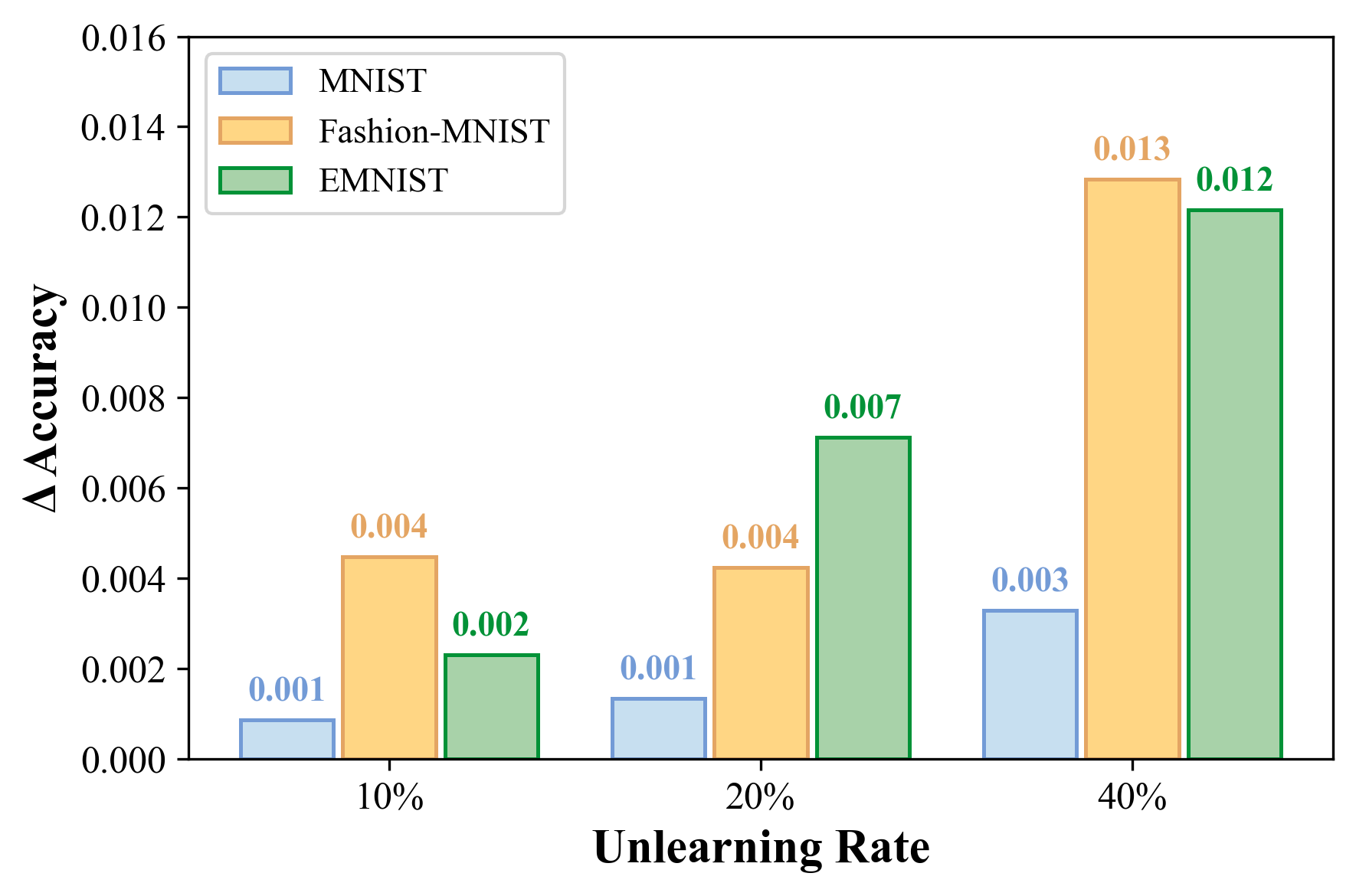}
    \caption{$\Delta$ Accuracy}
\end{subfigure}
\begin{subfigure}[b]{0.31\textwidth}
    \centering
    \includegraphics[width=\linewidth]{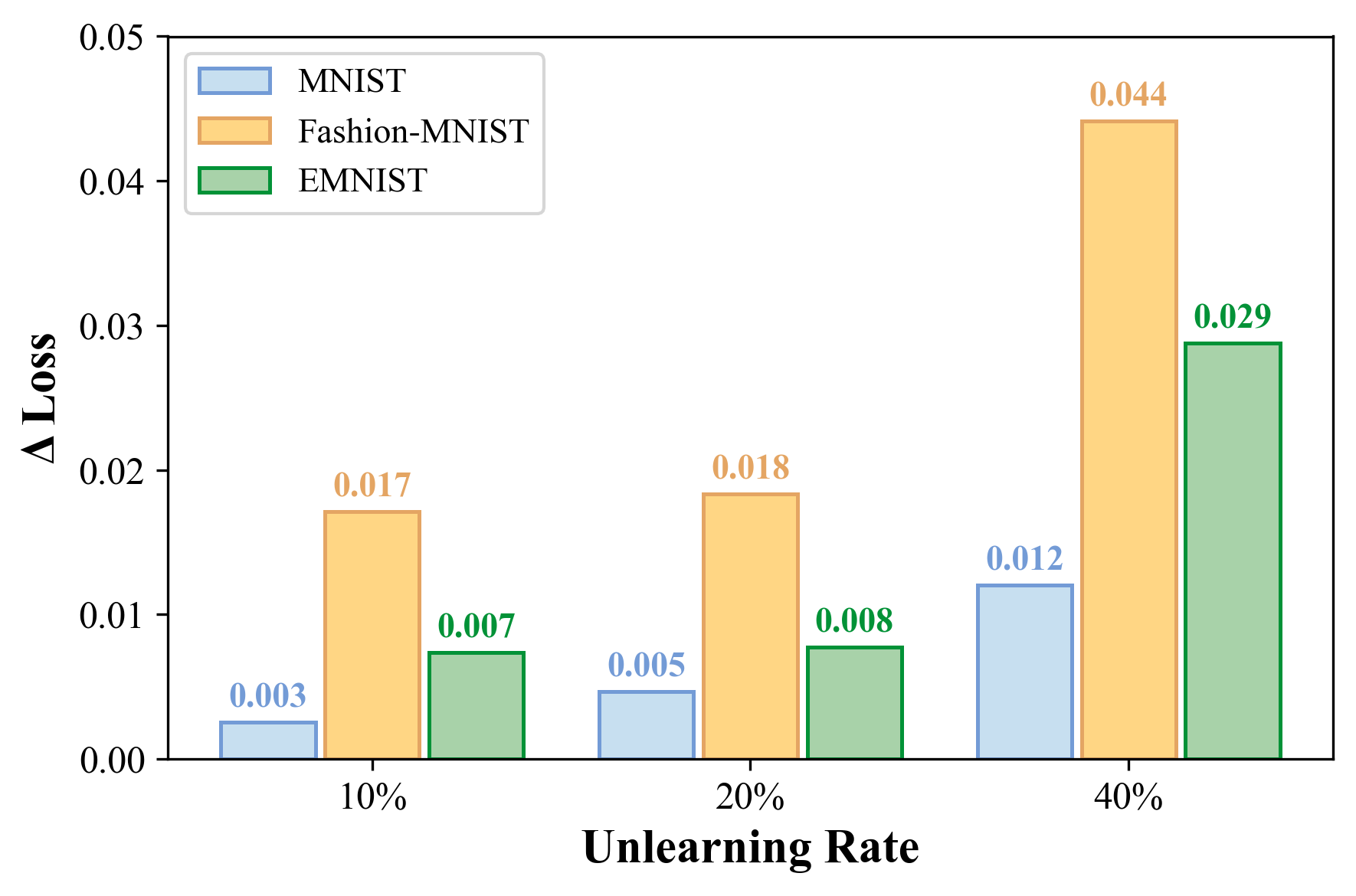}
    \caption{$\Delta$ Loss}
\end{subfigure}
\begin{subfigure}[b]{0.31\textwidth}
    \centering
    \includegraphics[width=\linewidth]{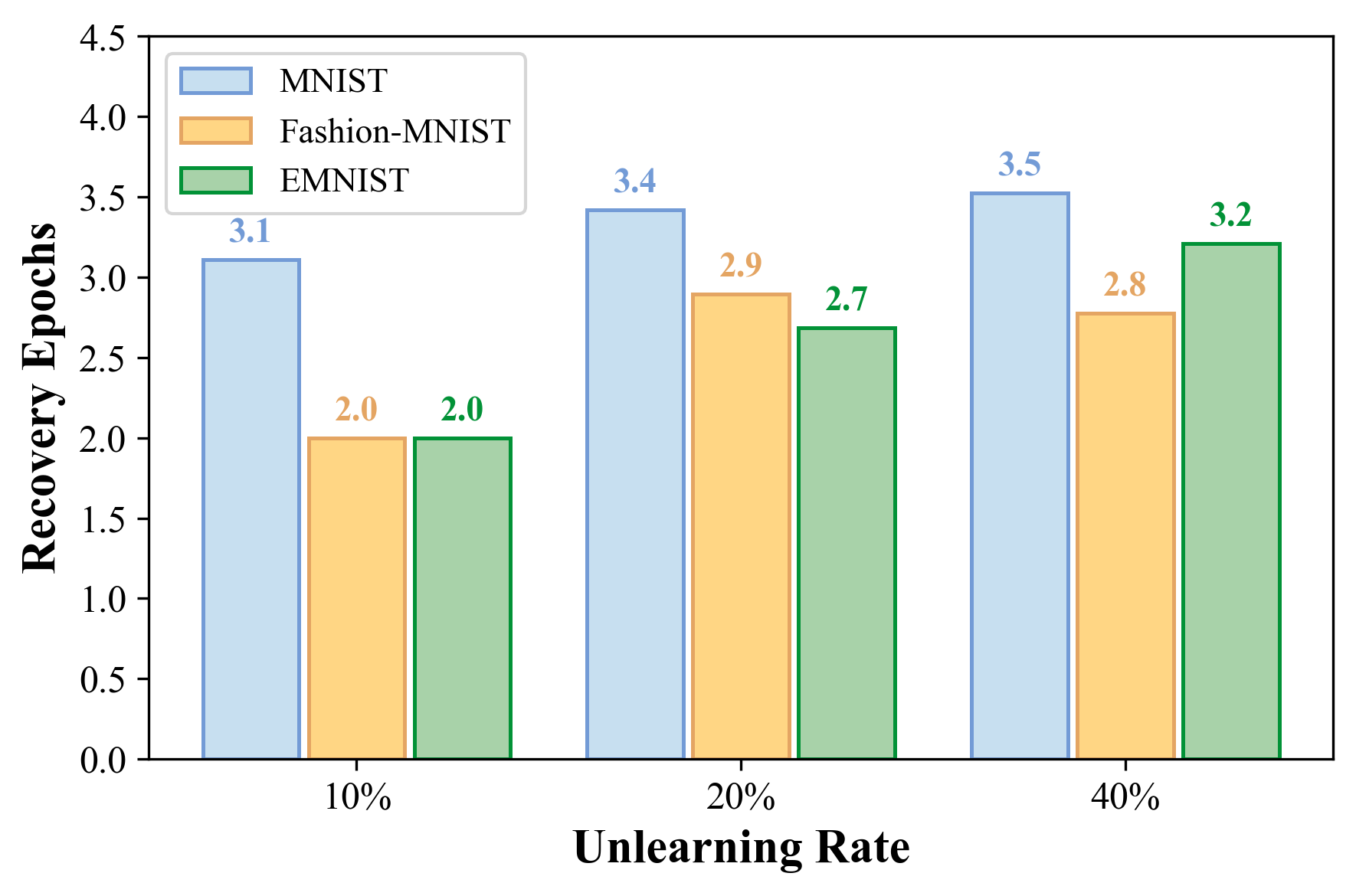}
    \caption{Accuracy Recovery Epochs}
\end{subfigure}
\caption{Impact of dynamic ISAC device unlearning on model utility in VerFU.}
\label{fig:model_utility}
\end{figure*}

\subsubsection{Baseline Methods}
For a comparative analysis of overhead, three representative verification approaches were selected as baselines: Proof of Learning (PoL) \cite{jia2021proof}, its privacy-preserving extension PoLHE, and zk-Proof of Training (zkPoT) \cite{garg2023experimenting}. The PoL framework authenticates training integrity by verifying intermediate model states and data batches through the inherent stochasticity of stochastic gradient descent (SGD). The PoL-HE variant enhances this architecture through homomorphic encryption to preserve data privacy while enabling the verifiable reproducibility of designated training phases. In contrast, zkPoT implements a zero-knowledge verification paradigm that combines MPC-in-the-head techniques with zk-SNARKs to generate proofs for efficient compliance auditing of training procedures. We introduce these established validation computation mechanisms into FUL to provide a basic baseline for evaluating the proposed VerFU framework designed for LAWN.

\subsection{Post-unlearning Model Utility}
To assess the impact of unlearning on collaborative model utility in LAWN, experiments were conducted on three datasets to evaluate changes in accuracy and loss under different unlearning rates. This evaluation reflects the stability of shared perception models when ISAC devices dynamically leave the learning process due to mobility or privacy requirements. As shown in Fig. \ref{fig:acc}, the accuracy curves for four different unlearning rates consistently reach high levels after 100 FL global rounds. Even with a 40\% unlearning rate, VerFU achieves accuracies of 98.33\%, 88.69\%, and 84.86\% on MNIST, Fashion-MNIST, and EMNIST, respectively. Furthermore, as the unlearning rate increases, the global model exhibits a slight decline in accuracy. For instance, the accuracy on MNIST under unlearning rates of 0\%, 10\%, 20\%, and 40\% is 98.80\%, 98.76\%, 98.42\%, and 98.33\%, respectively. Similarly, the loss values of VerFU were measured across the three datasets under varying unlearning rates, as illustrated in Fig. \ref{fig:loss}. The results indicate that unlearning does not lead to divergence, and the global model remains capable of recovering and converging as expected.

To further investigate unlearning-induced performance fluctuations, we quantitatively analyze three key metrics: accuracy drop ($\Delta acc= \lvert acc_{pre}-acc_{post} \rvert$), loss increase ($\Delta loss= \lvert loss_{pre}-loss_{post} \rvert$), and the number of rounds required for accuracy recovery. As depicted in Figs. \ref{fig:model_utility}(a) and \ref{fig:model_utility}(b), increasing the unlearning rate results in a performance decline, yet the maximum accuracy drop is only 1.3\%, while the highest observed loss increase is merely 0.044. Fig. \ref{fig:model_utility}(c) further demonstrates that despite transient performance fluctuations caused by unlearning requests from a subset of ISAC devices, the model recovers within 2–4 rounds of normal training. Overall, these results indicate that VerFU preserves robust post-unlearning model utility under dynamic participation. Even when some ISAC devices request unlearning, the global model remains stable and quickly adapts, which is critical for maintaining reliable edge intelligence services in LAWN.

\subsection{Communication Overhead}
Communication efficiency is a critical requirement for LAWN deployments, where UAV-assisted servers and ISAC devices operate under limited bandwidth and dynamic wireless conditions. To evaluate the communication efficiency of VerFU, Table \ref{tab:commu_stage} presents a breakdown of communication overhead at different phases. It is worth noting that the communication cost caused by the verification of unlearning is negligible. Even as the unlearning rate increases from 0\% to 40\%, the amount of data transmitted between each ISAC device and the UAV-assisted server remains below 1KB, indicating that verification imposes low burden on low-altitude wireless links. Furthermore, we compare the communication overhead introduced by VerFU and baseline methods across different datasets, as summarized in Table \ref{tab:communication}. We define Light VerFU as a variant of VerFU that does not incorporate HE and only includes the essential components for verifiable unlearning. Under this setting, the communication overhead remains as low as approximately 6.20MB across all datasets, highlighting its lightweight design. In contrast, the baseline method PoL incurs 12.41MB of overhead on both MNIST and Fashion-MNIST, and 12.46MB on EMNIST. When enhanced with HE for stronger privacy protection, VerFU introduces around 74MB of communication cost, which is still significantly lower than that of PoLHE and zkPoT. 
\begin{table}[tbp]
    \renewcommand{\arraystretch}{1.2}
    \centering
    \caption{Comparison of communication overheads for VerFU and baseline methods}\label{tab:communication}
    \begin{tabular}{cccc} 
    \toprule
     & \multicolumn{3}{c}{Communication Overhead} \\
    \cmidrule(l){2-4}
    Method/Dataset & MNIST & Fashion-MNIST & EMNIST\\ 
    \midrule
    \textbf{Light VerFU} & 6.20 MB & 6.20 MB & 6.23 MB\\
    PoL & 12.41 MB & 12.41 MB & 12.46 MB\\
    \textbf{VerFU} & 74.42 MB & 74.42 MB & 74.72 MB\\
    PoLHE & 853.82 MB & 853.82 MB & 857.43 MB\\ 
    zkPoT & 147.45 MB & 147.45 MB & 258.78 MB \\
    \bottomrule
    \end{tabular}
\end{table}
Expressly, the overhead of PoLHE on all datasets is less than 860MB, while zkPoT results in 147.45MB, 147.45MB, and 258.78MB of overhead on MNIST, Fashion-MNIST, and EMNIST, respectively. Although zkPoT outperforms PoLHE in communication efficiency, its reliance on zero-knowledge proofs still results in notable overhead. These results demonstrate that VerFU achieves a balanced trade-off between verification integrity and communication efficiency. Even when employing HE-based privacy protection, the reduced overhead makes VerFU a scalable and practical solution for LAWN applications that require efficient and verifiable FUL.

\begin{table*}[htbp]
\centering
\begin{threeparttable}
\caption{Communication Overhead of VerFU at Different Phases\tnote{1}}
\label{tab:commu_stage}
\renewcommand{\arraystretch}{1.2}
\setlength{\tabcolsep}{2.5pt}
\begin{tabular}{cccccc|cccc|cccc}
\toprule
 \multirow{2}{*}{Unl. R.} & & \multicolumn{4}{c|}{MNIST} & \multicolumn{4}{c|}{Fashion-MNIST} & \multicolumn{4}{c}{EMNIST} \\
\cmidrule{3-14}
 & & Pre. & Agg. \& Unl. & Ver. & Total & Pre. & Agg. \& Unl. & Ver. & Total & Pre. & Agg. \& Unl. & Ver. & Total \\
\midrule
 \multirow{2}{*}{0\%} & Device & \textbf{0.00}KB & 74.42MB & \textbf{0.00}KB & 74.42MB & \textbf{0.00}KB & 74.42MB & \textbf{0.00}KB & 74.42MB & \textbf{0.00}KB & 74.72MB & \textbf{0.00}KB & 74.72MB \\
 & Server & \textbf{0.00}KB & 74.42MB & \textbf{0.00}KB & 74.42MB & \textbf{0.00}KB & 74.42MB & \textbf{0.00}KB & 74.42MB & \textbf{0.00}KB & 74.72MB & \textbf{0.00}KB & 74.72MB \\
\cline{2-14}
 \multirow{2}{*}{10\%} & Device & \textbf{0.03}KB & 74.42MB & \textbf{0.03}KB & 74.42MB & \textbf{0.03}KB & 74.42MB & \textbf{0.03}KB & 74.42MB & \textbf{0.03}KB & 74.72MB & \textbf{0.03}KB & 74.72MB \\
 & Server & \textbf{0.62}KB & 74.42MB & \textbf{0.16}KB & 74.42MB & \textbf{0.62}KB & 74.42MB & \textbf{0.16}KB & 74.42MB & \textbf{0.62}KB & 74.72MB & \textbf{0.16}KB & 74.72MB \\
\cline{2-14}
 \multirow{2}{*}{20\%} & Device & \textbf{0.03}KB & 74.42MB & \textbf{0.03}KB & 74.42MB & \textbf{0.03}KB & 74.42MB & \textbf{0.03}KB & 74.42MB & \textbf{0.03}KB & 74.72MB & \textbf{0.03}KB & 74.72MB \\
 & Server & \textbf{0.62}KB & 74.42MB & \textbf{0.16}KB & 74.42MB & \textbf{0.62}KB & 74.42MB & \textbf{0.16}KB & 74.42MB & \textbf{0.62}KB & 74.72MB & \textbf{0.16}KB & 74.72MB \\
\cline{2-14}
 \multirow{2}{*}{40\%} & Device & \textbf{0.03}KB & 74.42MB & \textbf{0.03}KB & 74.42MB & \textbf{0.03}KB & 74.42MB & \textbf{0.03}KB & 74.42MB & \textbf{0.03}KB & 74.72MB & \textbf{0.03}KB & 74.72MB \\
 & Server & \textbf{0.62}KB & 74.42MB & \textbf{0.31}KB & 74.42MB & \textbf{0.62}KB & 74.42MB & \textbf{0.31}KB & 74.42MB & \textbf{0.62}KB & 74.72MB & \textbf{0.31}KB & 74.72MB \\
\bottomrule
\end{tabular}
\begin{tablenotes}
\footnotesize
\item[1] The total number of ISAC devices is set to 500, and the FUL process runs for 100 rounds, with 20 ISAC devices selected per round. The unlearning rate indicates the proportion of unlearning ISAC devices across all rounds relative to the total number of ISAC devices. Pre., Agg. \& Unl., and Ver. denote the preparation phase, aggregation and unlearning phase, and verification phase, respectively. The \textbf{Bold} numbers mark the verification communication overhead introduced by VerFU.
\end{tablenotes}
\end{threeparttable}
\end{table*}

\subsection{Verification Overhead}
\begin{table*}[htbp]
\centering
\begin{threeparttable}
\caption{Computation Overhead of VerFU at Different Phases\tnote{1}}
\label{tab:compu_stage}
\renewcommand{\arraystretch}{1.2}
\setlength{\tabcolsep}{1.3pt}
\begin{tabular}{cccccc|cccc|cccc}
\toprule
 \multirow{2}{*}{Unl. R.} & & \multicolumn{4}{c|}{MNIST} & \multicolumn{4}{c|}{Fashion-MNIST} & \multicolumn{4}{c}{EMNIST} \\
\cmidrule{3-14}
 & & Pre. & Agg. \& Unl. & Ver. & Total & Pre. & Agg. \& Unl. & Ver. & Total & Pre. & Agg. \& Unl. & Ver. & Total \\
\midrule
 \multirow{2}{*}{0\%} & Device & 0.18s & 1508.00s & 452.56s & 1960.75s &  0.16s & 1553.36s & 465.66s & 2019.18s & 2.07s & 1671.38s & 494.84s & 2168.29s \\
 & Server & 0.00s & 0.01s & 0.00s & 0.01s & 0.00s & 0.01s & 0.00s & 0.01s & 0.00s & 0.01s & 0.00s & 0.01s \\
\cline{2-14}
 \multirow{2}{*}{10\%} & Device & 12.24s & 1590.07s & 475.61+\textbf{11.53}s & 2089.45s & 12.67s & 1602.08s & 481.39+\textbf{11.70}s & 2107.91s & 14.27s & 1523.91s & 482.90+\textbf{12.06}s & 2033.14s \\
 & Server & 0.00s & 0.02s & 0.00s & 0.02s & 0.00s & 0.02s & 0.00s & 0.02s & 0.00s & 0.02s & 0.00s & 0.02s \\
\cline{2-14}
 \multirow{2}{*}{20\%} & Device & 12.86s & 1581.96s & 499.03+\textbf{11.69}s & 2105.54s & 12.60s & 1516.13s & 480.01+\textbf{11.88}s & 2020.62s & 13.58s & 1567.84s & 468.29+\textbf{12.95}s & 2062.66s \\
 & Server & 0.00s & 0.02s & 0.00s & 0.02s & 0.00s & 0.03s & 0.00s & 0.03s & 0.00s & 0.02s & 0.00s & 0.02s \\
\cline{2-14}
 \multirow{2}{*}{40\%} & Device & 12.34s & 1672.54s & 510.79+\textbf{12.18}s & 2207.85s & 12.34s & 1638.29s & 540.66+\textbf{13.51}s & 2204.80 & 12.72s & 1527.85s & 474.76+\textbf{14.30}s & 2029.63s \\
 & Server & 0.00s & 0.02s & 0.00s & 0.02s & 0.00s & 0.02s & 0.00s & 0.02s & 0.00s & 0.02s & 0.00s & 0.02s \\
\bottomrule
\end{tabular}
\begin{tablenotes}
\footnotesize
\item[1] The total number of ISAC devices is set to 500, and the FUL process runs for 100 rounds, with 20 ISAC devices selected per round. The unlearning rate indicates the proportion of unlearning ISAC devices across all rounds relative to the total number of ISAC devices. Pre., Agg. \& Unl., and Ver. denote the preparation phase, aggregation and unlearning phase, and verification phase, respectively. The \textbf{Bold} numbers mark the verification computation cost introduced by VerFU.
\end{tablenotes}
\end{threeparttable}
\end{table*}

Efficient verification is essential to ensure that federated unlearning can be executed on demand without becoming a computational bottleneck in LAWN. Table \ref{tab:compu_stage} provides a detailed analysis of the computation overhead incurred by VerFU across different phases and unlearning rates on various datasets. As shown, the majority of computation time is attributed to encryption and decryption operations associated with HE. In contrast, the additional cost introduced by the verification phase remains minimal. For instance, even when the unlearning rate reaches 40\%, the verification overhead consistently remains below 15s across all datasets. Building on this analysis, Table \ref{tab:verify} further compares the verification efficiency between VerFU and baseline methods on the three benchmark datasets. VerFU achieves the lowest verification latency, requiring only 11.53s, 11.70s, and 12.06s on MNIST, Fashion-MNIST, and EMNIST, respectively. These results demonstrate the high efficiency of VerFU in enabling verifiable federated unlearning with negligible additional verification overhead.
\begin{table}[tbp]
    \renewcommand{\arraystretch}{1.2}
    \centering
    \caption{Comparison of verification overheads for VerFU and baseline methods}\label{tab:verify}
    \begin{tabular}{cccc} 
    \toprule
     & \multicolumn{3}{c}{Verification Overhead} \\
    \cmidrule(l){2-4}
    Method/Dataset & MNIST & Fashion-MNIST & EMNIST\\ 
    \midrule
    \textbf{VerFU} & 11.53s & 11.70s & 12.06s \\
    PoL & 104.38s & 105.93s & 1137.56s \\
    PoLHE & 17415.82s & 17808.72s & 191565.13s \\ 
    zkPoT & 1831.21s & 1912.27s & 24746.54s \\
    \bottomrule
    \end{tabular}
\end{table}
However, PoL produces significantly higher verification costs, taking approximately 104.38s on MNIST, 105.93s on Fashion-MNIST, and up to 1137.56s on EMNIST. While zkPoT achieves an efficiency improvement over PoLHE, the computational burden of zero-knowledge proof generation and verification still results in verification times of 1831.21s, 1912.27s, and 24746.54s on the respective datasets. To conclude, VerFU demonstrates superior verification efficiency across all baselines. By keeping verification latency low, VerFU empowers ISAC devices with practical verifiability guarantees while remaining compatible with the real-time and resource-constrained nature of LAWN.

\section{Related Works} \label{sec7}
\subsection{Federated unlearning}
Machine unlearning \cite{cao2015towards, li2025machine} requires the elimination of the influence of special training samples on a model. The most intuitive approach is retraining from scratch on the updated training dataset after removing the specified subset. However, this method results in substantial retraining overhead. To mitigate this overhead, Bourtoule et al. \cite{bourtoule2021machine} proposed a machine unlearning scheme based on slicing techniques, where the training data is partitioned into multiple disjoint shards, and each shard independently trains its model while storing the model parameter states. Unlearning can then be achieved by retraining only the slice model that contains the deleted data. Ginart et al. \cite{ginart2019making} adopted a divide-and-conquer k-means algorithm, partitioning data through a tree structure and recalculating only impacted subproblems. Additionally, existing works \cite{golatkar2020eternal, mehta2022deep} proposed fine-tuning the global model using the remaining training dataset after deletion to reduce the impact of the removed data.

Due to privacy constraints, federated unlearning cannot access local datasets. Existing machine unlearning methods, which depend on explicit access to both deleted and remaining data, are difficult to extend straightforwardly to federated unlearning. Liu et al. \cite{liu2021federaser} pioneered a novel approach for data unlearning in federated learning by leveraging the client parameter updates stored on the server to iteratively calibrate the global model from previous rounds before the unlearning request. This method eliminates the influence of specific client data from the global model without requiring retraining. In addition to methods that calibrate historical updates \cite{liu2021federaser, yuan2023federated, fu2024client}, several works have explored federated unlearning from perspectives such as gradient subtraction \cite{guo2023fast, zhang2023fedrecovery, jiang2025towards}, gradient scaling \cite{gao2024verifi}, or gradient ascent training \cite{halimi2022federated, li2023subspace, wang2024server}. For instance, Zhang et al. \cite{zhang2023fedrecovery} introduced a differential privacy-based method that eliminates client impact by subtracting gradient residuals and applying the Gaussian mechanism to maintain statistical indistinguishability. Wang et al. \cite{wang2024server} devised a scheme to remove the impact of low-quality data on the global model in federated learning. By performing gradient ascent training on the target client and applying boost training to mitigate bias, the scheme efficiently eliminates the influence of low-quality data on the global model.

\subsection{Aggregation and unlearning verification}
For aggregation verification, researchers primarily employ cryptographic techniques to verify the server aggregation process and results. Xu et al. \cite{xu2019verifynet} were the first to propose a verifiable secure aggregation scheme based on double masking and homomorphic hash functions. Since each gradient vector during transmission is encrypted using homomorphic hash functions and a pseudorandom function, this scheme incurs significant additional verification overhead. Specifically, each dimension of the vector requires four pairings and one exponentiation in the cyclic group to verify the aggregation result. Zheng et al. \cite{zheng2022aggregation} proposed a hardware-assisted trusted execution environment that verifies the aggregation results by signing the computation process and output, thereby ensuring the computational integrity of the server. Wang et al. \cite{wang2022vosa} introduced a verifiable and oblivious secure aggregation protocol, where users verify the correctness of the aggregation results through designed bilinear pairings, effectively preventing dishonest behavior of the aggregation server and ensuring the integrity and verifiability of the aggregation results.
To reduce verification overhead, Fu et al. \cite{fu2020vfl} designed a verifiable federated learning framework that combines Lagrange interpolation and the Chinese Remainder Theorem. By carefully setting interpolation points, the correctness of the aggregated gradients can be verified, ensuring that participants can independently detect even if the aggregation server maliciously returns forged gradients. They demonstrated that the overhead of this verification method does not increase with the number of participants.
Hahn et al. \cite{hahn2021versa} designed a verifiable secure aggregation scheme for cross-device federated learning. By replacing computationally expensive bilinear pairing operations with the lightweight pseudorandom generator, the scheme achieves efficient double aggregation with low verification cost for both server and clients.

For unlearning verification, Weng et al. \cite{weng2024proof} utilized trusted execution environments like the method in \cite{zheng2022aggregation} to enforce verifiability, but their reliance on trusted hardware limits practicality. Inspired by backdoor attacks, Sommer et al. \cite{sommer2020towards} proposed a verification mechanism for machine unlearning, which allows users to embed unique backdoor triggers into the data and verify data deletion by testing model responses. Nevertheless, this mechanism is vulnerable to server detection of the backdoor triggers and incurs high retraining overhead. To overcome these issues, Guo et al. \cite{guo2023verifying} introduced invisible backdoor markers and incremental learning to prevent forgery and improve retraining efficiency. Gao et al. \cite{gao2024verifi} devised a unified framework for verifiable federated unlearning. This framework allows departing participants to select specific sample data for marking and maintain low loss after model fine-tuning. After the server performs the unlearning operation, the global model is checked to determine whether it can restore high loss on the marked set, thus confirming the effectiveness of unlearning.

\section{Conclusion} \label{sec8}
In this paper, we propose a privacy-preserving and client-verifiable federated unlearning framework, called VerFU, which is designed for LAWN. It allows ISAC devices to verify the execution of unlearning requests by the UAV-assisted server without having to access original data samples. Specifically, VerFU leverages linear homomorphic hash, commitment schemes, and homomorphic encryption to construct tamper-proof contribution records, enabling secure aggregation and unlearning in the encrypted domain. In addition, VerFU supports concurrent unlearning and verification by incorporating a lightweight client state tagging mechanism and a consistency check over hashed linear combinations. We provide a rigorous security analysis of the framework and evaluate VerFU extensively on three representative datasets. Experimental evaluations show that VerFU effectively recovers model utility after unlearning while incurring low verification and communication overhead. These results indicate that VerFU is a practical and scalable solution for enforcing verifiable federated unlearning in LAWN, contributing to privacy-preserving and efficient edge intelligence in dynamic low-altitude environments.

\bibliographystyle{IEEEtran}
\bibliography{ref}

\begin{thebibliography}{10}
\providecommand{\url}[1]{#1}
\csname url@samestyle\endcsname
\providecommand{\newblock}{\relax}
\providecommand{\bibinfo}[2]{#2}
\providecommand{\BIBentrySTDinterwordspacing}{\spaceskip=0pt\relax}
\providecommand{\BIBentryALTinterwordstretchfactor}{4}
\providecommand{\BIBentryALTinterwordspacing}{\spaceskip=\fontdimen2\font plus
\BIBentryALTinterwordstretchfactor\fontdimen3\font minus \fontdimen4\font\relax}
\providecommand{\BIBforeignlanguage}[2]{{%
\expandafter\ifx\csname l@#1\endcsname\relax
\typeout{** WARNING: IEEEtran.bst: No hyphenation pattern has been}%
\typeout{** loaded for the language `#1'. Using the pattern for}%
\typeout{** the default language instead.}%
\else
\language=\csname l@#1\endcsname
\fi
#2}}
\providecommand{\BIBdecl}{\relax}
\BIBdecl

\bibitem{yuan2025ground}
W.~Yuan, Y.~Cui, J.~Wang, F.~Liu, G.~Sun, T.~Xiang, J.~Xu, S.~Jin, D.~Niyato, S.~Coleri \emph{et~al.}, ``From ground to sky: Architectures, applications, and challenges shaping low-altitude wireless networks,'' \emph{arXiv preprint arXiv:2506.12308}, 2025.

\bibitem{luo2025toward}
G.~Luo, J.~Li, Q.~Zhang, Z.~Feng, Q.~Yuan, Y.~Lin, H.~Zhang, N.~Cheng, and P.~Zhang, ``Toward low-altitude airspace management and uav operations: Requirements, architecture and enabling technologies,'' \emph{IEEE Wireless Communications}, 2025.

\bibitem{wang2025toward}
Y.~Wang, G.~Sun, Z.~Sun, J.~Wang, J.~Li, C.~Zhao, J.~Wu, S.~Liang, M.~Yin, P.~Wang \emph{et~al.}, ``Toward realization of low-altitude economy networks: Core architecture, integrated technologies, and future directions,'' \emph{arXiv preprint arXiv:2504.21583}, 2025.

\bibitem{mcmahan2017communication}
B.~McMahan, E.~Moore, D.~Ramage, S.~Hampson, and B.~A. y~Arcas, ``Communication-efficient learning of deep networks from decentralized data,'' in \emph{Artificial intelligence and statistics}.\hskip 1em plus 0.5em minus 0.4em\relax PMLR, 2017, pp. 1273--1282.

\bibitem{yu2025lightweight}
B.~Yu, J.~Zhao, K.~Zhang, J.~Gong, and H.~Qian, ``Lightweight and dynamic privacy-preserving federated learning via functional encryption,'' \emph{IEEE Transactions on Information Forensics and Security}, 2025.

\bibitem{voigt2017eu}
P.~Voigt and A.~Von~dem Bussche, ``The eu general data protection regulation (gdpr),'' \emph{A practical guide, 1st ed., Cham: Springer International Publishing}, vol.~10, no. 3152676, pp. 10--5555, 2017.

\bibitem{harding2019understanding}
E.~L. Harding, J.~J. Vanto, R.~Clark, L.~Hannah~Ji, and S.~C. Ainsworth, ``Understanding the scope and impact of the california consumer privacy act of 2018,'' \emph{Journal of Data Protection \& Privacy}, vol.~2, no.~3, pp. 234--253, 2019.

\bibitem{liu2021federaser}
G.~Liu, X.~Ma, Y.~Yang, C.~Wang, and J.~Liu, ``Federaser: Enabling efficient client-level data removal from federated learning models,'' in \emph{2021 IEEE/ACM 29th international symposium on quality of service (IWQOS)}.\hskip 1em plus 0.5em minus 0.4em\relax IEEE, 2021, pp. 1--10.

\bibitem{wu2022federated}
L.~Wu, S.~Guo, J.~Wang, Z.~Hong, J.~Zhang, and Y.~Ding, ``Federated unlearning: Guarantee the right of clients to forget,'' \emph{IEEE Network}, vol.~36, no.~5, pp. 129--135, 2022.

\bibitem{ding2024strategic}
N.~Ding, E.~Wei, and R.~Berry, ``Strategic data revocation in federated unlearning,'' in \emph{IEEE INFOCOM 2024-IEEE Conference on Computer Communications}.\hskip 1em plus 0.5em minus 0.4em\relax IEEE, 2024, pp. 1151--1160.

\bibitem{bourtoule2021machine}
L.~Bourtoule, V.~Chandrasekaran, C.~A. Choquette-Choo, H.~Jia, A.~Travers, B.~Zhang, D.~Lie, and N.~Papernot, ``Machine unlearning,'' in \emph{2021 IEEE Symposium on Security and Privacy (SP)}.\hskip 1em plus 0.5em minus 0.4em\relax IEEE, 2021, pp. 141--159.

\bibitem{shaik2024exploring}
T.~Shaik, X.~Tao, H.~Xie, L.~Li, X.~Zhu, and Q.~Li, ``Exploring the landscape of machine unlearning: A comprehensive survey and taxonomy,'' \emph{IEEE Transactions on Neural Networks and Learning Systems}, 2024.

\bibitem{ginart2019making}
A.~Ginart, M.~Guan, G.~Valiant, and J.~Y. Zou, ``Making ai forget you: Data deletion in machine learning,'' \emph{Advances in neural information processing systems}, vol.~32, 2019.

\bibitem{liu2021revfrf}
Y.~Liu, Z.~Ma, Y.~Yang, X.~Liu, J.~Ma, and K.~Ren, ``Revfrf: Enabling cross-domain random forest training with revocable federated learning,'' \emph{IEEE Transactions on Dependable and Secure Computing}, vol.~19, no.~6, pp. 3671--3685, 2021.

\bibitem{guo2023fast}
X.~Guo, P.~Wang, S.~Qiu, W.~Song, Q.~Zhang, X.~Wei, and D.~Zhou, ``Fast: Adopting federated unlearning to eliminating malicious terminals at server side,'' \emph{IEEE Transactions on Network Science and Engineering}, 2023.

\bibitem{fu2024client}
C.~Fu, W.~Jia, and N.~Ruan, ``Client-free federated unlearning via training reconstruction with anchor subspace calibration,'' in \emph{ICASSP 2024-2024 IEEE International Conference on Acoustics, Speech and Signal Processing (ICASSP)}.\hskip 1em plus 0.5em minus 0.4em\relax IEEE, 2024, pp. 9281--9285.

\bibitem{cao2023fedrecover}
X.~Cao, J.~Jia, Z.~Zhang, and N.~Z. Gong, ``Fedrecover: Recovering from poisoning attacks in federated learning using historical information,'' in \emph{2023 IEEE Symposium on Security and Privacy (SP)}.\hskip 1em plus 0.5em minus 0.4em\relax IEEE, 2023, pp. 1366--1383.

\bibitem{yuan2023federated}
W.~Yuan, H.~Yin, F.~Wu, S.~Zhang, T.~He, and H.~Wang, ``Federated unlearning for on-device recommendation,'' in \emph{Proceedings of the sixteenth ACM international conference on web search and data mining}, 2023, pp. 393--401.

\bibitem{zhang2023fedrecovery}
L.~Zhang, T.~Zhu, H.~Zhang, P.~Xiong, and W.~Zhou, ``Fedrecovery: Differentially private machine unlearning for federated learning frameworks,'' \emph{IEEE Transactions on Information Forensics and Security}, 2023.

\bibitem{romandini2024federated}
N.~Romandini, A.~Mora, C.~Mazzocca, R.~Montanari, and P.~Bellavista, ``Federated unlearning: A survey on methods, design guidelines, and evaluation metrics,'' \emph{IEEE Transactions on Neural Networks and Learning Systems}, 2024.

\bibitem{sommer2022athena}
D.~M. Sommer, L.~Song, S.~Wagh, and P.~Mittal, ``Athena: Probabilistic verification of machine unlearning,'' \emph{Proceedings on Privacy Enhancing Technologies}, 2022.

\bibitem{guo2023verifying}
Y.~Guo, Y.~Zhao, S.~Hou, C.~Wang, and X.~Jia, ``Verifying in the dark: Verifiable machine unlearning by using invisible backdoor triggers,'' \emph{IEEE Transactions on Information Forensics and Security}, 2023.

\bibitem{gao2024verifi}
X.~Gao, X.~Ma, J.~Wang, Y.~Sun, B.~Li, S.~Ji, P.~Cheng, and J.~Chen, ``Verifi: Towards verifiable federated unlearning,'' \emph{IEEE Transactions on Dependable and Secure Computing}, 2024.

\bibitem{wang2024fedu}
W.~Wang, C.~Zhang, Z.~Tian, and S.~Yu, ``Fedu: Federated unlearning via user-side influence approximation forgetting,'' \emph{IEEE Transactions on Dependable and Secure Computing}, 2024.

\bibitem{pan2025federated}
Z.~Pan, Z.~Wang, C.~Li, K.~Zheng, B.~Wang, X.~Tang, and J.~Zhao, ``Federated unlearning with gradient descent and conflict mitigation,'' in \emph{Proceedings of the AAAI Conference on Artificial Intelligence}, vol.~39, no.~19, 2025, pp. 19\,804--19\,812.

\bibitem{liu2024survey}
Z.~Liu, Y.~Jiang, J.~Shen, M.~Peng, K.-Y. Lam, X.~Yuan, and X.~Liu, ``A survey on federated unlearning: Challenges, methods, and future directions,'' \emph{ACM Computing Surveys}, vol.~57, no.~1, pp. 1--38, 2024.

\bibitem{rivest1978data}
R.~L. Rivest, L.~Adleman, M.~L. Dertouzos \emph{et~al.}, ``On data banks and privacy homomorphisms,'' \emph{Foundations of secure computation}, vol.~4, no.~11, pp. 169--180, 1978.

\bibitem{paillier1999public}
P.~Paillier, ``Public-key cryptosystems based on composite degree residuosity classes,'' in \emph{International conference on the theory and applications of cryptographic techniques}.\hskip 1em plus 0.5em minus 0.4em\relax Springer, 1999, pp. 223--238.

\bibitem{bellare1994incremental}
M.~Bellare, O.~Goldreich, and S.~Goldwasser, ``Incremental cryptography: The case of hashing and signing,'' in \emph{Advances in Cryptology—CRYPTO’94: 14th Annual International Cryptology Conference Santa Barbara, California, USA August 21--25, 1994 Proceedings 14}.\hskip 1em plus 0.5em minus 0.4em\relax Springer, 1994, pp. 216--233.

\bibitem{deng2012mnist}
L.~Deng, ``The mnist database of handwritten digit images for machine learning research [best of the web],'' \emph{IEEE signal processing magazine}, vol.~29, no.~6, pp. 141--142, 2012.

\bibitem{xiao2017fashion}
H.~Xiao, K.~Rasul, and R.~Vollgraf, ``Fashion-mnist: a novel image dataset for benchmarking machine learning algorithms,'' \emph{arXiv preprint arXiv:1708.07747}, 2017.

\bibitem{cohen2017emnist}
G.~Cohen, S.~Afshar, J.~Tapson, and A.~Van~Schaik, ``Emnist: Extending mnist to handwritten letters,'' in \emph{2017 international joint conference on neural networks (IJCNN)}.\hskip 1em plus 0.5em minus 0.4em\relax IEEE, 2017, pp. 2921--2926.

\bibitem{wortsman2020supermasks}
M.~Wortsman, V.~Ramanujan, R.~Liu, A.~Kembhavi, M.~Rastegari, J.~Yosinski, and A.~Farhadi, ``Supermasks in superposition,'' \emph{Advances in Neural Information Processing Systems}, vol.~33, pp. 15\,173--15\,184, 2020.

\bibitem{jia2021proof}
H.~Jia, M.~Yaghini, C.~A. Choquette-Choo, N.~Dullerud, A.~Thudi, V.~Chandrasekaran, and N.~Papernot, ``Proof-of-learning: Definitions and practice,'' in \emph{2021 IEEE Symposium on Security and Privacy (SP)}.\hskip 1em plus 0.5em minus 0.4em\relax IEEE, 2021, pp. 1039--1056.

\bibitem{garg2023experimenting}
S.~Garg, A.~Goel, S.~Jha, S.~Mahloujifar, M.~Mahmoody, G.-V. Policharla, and M.~Wang, ``Experimenting with zero-knowledge proofs of training,'' in \emph{Proceedings of the 2023 ACM SIGSAC Conference on Computer and Communications Security}, 2023, pp. 1880--1894.

\bibitem{cao2015towards}
Y.~Cao and J.~Yang, ``Towards making systems forget with machine unlearning,'' in \emph{2015 IEEE symposium on security and privacy}.\hskip 1em plus 0.5em minus 0.4em\relax IEEE, 2015, pp. 463--480.

\bibitem{li2025machine}
N.~Li, C.~Zhou, Y.~Gao, H.~Chen, Z.~Zhang, B.~Kuang, and A.~Fu, ``Machine unlearning: Taxonomy, metrics, applications, challenges, and prospects,'' \emph{IEEE Transactions on Neural Networks and Learning Systems}, 2025.

\bibitem{golatkar2020eternal}
A.~Golatkar, A.~Achille, and S.~Soatto, ``Eternal sunshine of the spotless net: Selective forgetting in deep networks,'' in \emph{Proceedings of the IEEE/CVF Conference on Computer Vision and Pattern Recognition}, 2020, pp. 9304--9312.

\bibitem{mehta2022deep}
R.~Mehta, S.~Pal, V.~Singh, and S.~N. Ravi, ``Deep unlearning via randomized conditionally independent hessians,'' in \emph{Proceedings of the IEEE/CVF Conference on Computer Vision and Pattern Recognition}, 2022, pp. 10\,422--10\,431.

\bibitem{jiang2025towards}
Y.~Jiang, J.~Shen, Z.~Liu, C.~W. Tan, and K.-Y. Lam, ``Towards efficient and certified recovery from poisoning attacks in federated learning,'' \emph{IEEE Transactions on Information Forensics and Security}, 2025.

\bibitem{halimi2022federated}
A.~Halimi, S.~Kadhe, A.~Rawat, and N.~Baracaldo, ``Federated unlearning: How to efficiently erase a client in fl?'' \emph{arXiv preprint arXiv:2207.05521}, 2022.

\bibitem{li2023subspace}
G.~Li, L.~Shen, Y.~Sun, Y.~Hu, H.~Hu, and D.~Tao, ``Subspace based federated unlearning,'' \emph{arXiv preprint arXiv:2302.12448}, 2023.

\bibitem{wang2024server}
P.~Wang, W.~Song, H.~Qi, C.~Zhou, F.~Li, Y.~Wang, P.~Sun, and Q.~Zhang, ``Server-initiated federated unlearning to eliminate impacts of low-quality data,'' \emph{IEEE Transactions on Services Computing}, 2024.

\bibitem{xu2019verifynet}
G.~Xu, H.~Li, S.~Liu, K.~Yang, and X.~Lin, ``Verifynet: Secure and verifiable federated learning,'' \emph{IEEE Transactions on Information Forensics and Security}, vol.~15, pp. 911--926, 2019.

\bibitem{zheng2022aggregation}
Y.~Zheng, S.~Lai, Y.~Liu, X.~Yuan, X.~Yi, and C.~Wang, ``Aggregation service for federated learning: An efficient, secure, and more resilient realization,'' \emph{IEEE Transactions on Dependable and Secure Computing}, vol.~20, no.~2, pp. 988--1001, 2022.

\bibitem{wang2022vosa}
Y.~Wang, A.~Zhang, S.~Wu, and S.~Yu, ``Vosa: Verifiable and oblivious secure aggregation for privacy-preserving federated learning,'' \emph{IEEE Transactions on Dependable and Secure Computing}, vol.~20, no.~5, pp. 3601--3616, 2022.

\bibitem{fu2020vfl}
A.~Fu, X.~Zhang, N.~Xiong, Y.~Gao, H.~Wang, and J.~Zhang, ``Vfl: A verifiable federated learning with privacy-preserving for big data in industrial iot,'' \emph{IEEE Transactions on Industrial Informatics}, vol.~18, no.~5, pp. 3316--3326, 2020.

\bibitem{hahn2021versa}
C.~Hahn, H.~Kim, M.~Kim, and J.~Hur, ``Versa: Verifiable secure aggregation for cross-device federated learning,'' \emph{IEEE Transactions on Dependable and Secure Computing}, vol.~20, no.~1, pp. 36--52, 2021.

\bibitem{weng2024proof}
J.~Weng, S.~Yao, Y.~Du, J.~Huang, J.~Weng, and C.~Wang, ``Proof of unlearning: Definitions and instantiation,'' \emph{IEEE Transactions on Information Forensics and Security}, 2024.

\bibitem{sommer2020towards}
D.~M. Sommer, L.~Song, S.~Wagh, and P.~Mittal, ``Towards probabilistic verification of machine unlearning,'' \emph{arXiv preprint arXiv:2003.04247}, 2020.

\end{thebibliography}

\end{document}